\newcolumntype{C}{>{$}c<{$}} % math-mode version of "l" column type
\renewcommand*{\baselinestretch}{1.25}
\newtheorem{theorem}{Theorem}
\newtheorem{lemma}{Lemma}
\newtheorem{proposition}{Proposition}
\newtheorem{lemA}{Lemma}[section]
\newtheorem{coroA}{Corollary}[section]
\theoremstyle{definition}
\newtheorem*{rmk*}{Remark}
\newtheorem{rmk}{Remark}
\newtheorem{assumption}{Assumption}
\renewcommand*\proofname{\upshape{\bfseries{Proof}}}
    \renewcommand*{\section}{\@startsection{section}{1}{\z@}%
    {10pt}{5pt}{\reset@font\normalsize\bfseries}}
    \renewcommand*{\subsection}{\@startsection{subsection}{2}{\z@}%
    {5pt}{5pt}{\reset@font\normalsize\mdseries\itshape}}
    \renewcommand*{\subsubsection}{\@startsection{subsubsection}{3}{\z@}%
    {5pt}{5pt}{\reset@font\normalsize\mdseries\itshape}}
\def\@seccntformat#1{\csname the#1\endcsname.\quad}
\def\@listi{\leftmargin\leftmargini
  \topsep=.5\baselineskip %0pt
  \partopsep=0pt \parsep=0pt \itemsep=0pt}
\let\@listI\@listi
\def\@listii{\leftmargin\leftmarginii
  \labelwidth\leftmarginii \advance\labelwidth-\labelsep
  \topsep=0pt \partopsep=0pt \parsep=0pt \itemsep=0pt}
\def\@listiii{\leftmargin\leftmarginiii
  \labelwidth\leftmarginiii \advance\labelwidth-\labelsep
  \topsep=0pt \partopsep=0pt \parsep=0pt \itemsep=0pt}
\def\@listiv{\leftmargin\leftmarginiv
  \labelwidth\leftmarginiv \advance\labelwidth-\labelsep
  \topsep=0pt \partopsep=0pt \parsep=0pt \itemsep=0pt}
\renewenvironment{proof}[1][\proofname]{\par
  \pushQED{\qed}%
  \normalfont \topsep6\p@\@plus6\p@\relax
  \trivlist
  \item[\hskip\labelsep
        \bfseries
    #1\@addpunct{.}]\ignorespaces
}{%
  \popQED\endtrivlist\@endpefalse
}
\newcommand{\tcr}[1]{\textcolor{black}{#1}}
\newcommand{\ol}[1]{\overline{#1}}
\newcommand{\ul}[1]{\underline{#1}}
\newcommand{\pck}[1]{{\bfseries#1}}
\title{Multi-scale analysis of lead-lag relationships in high-frequency financial markets}
\author{Takaki Hayashi\thanks{Graduate School of Business Administration, Keio University, 4-1-1 Hiyoshi, Yokohama 223-8526, Japan}
\thanks{Department of Business Administration, Graduate School of Social Sciences, Tokyo Metropolitan University, Marunouchi Eiraku Bldg. 18F, 1-4-1 Marunouchi, Chiyoda-ku, Tokyo 100-0005 Japan}
\thanks{CREST, Japan Science and Technology Agency}
\and
Yuta Koike
\thanks{Mathematics and Informatics Center and Graduate School of Mathematical Sciences, The University of Tokyo, 3-8-1 Komaba, Meguro-ku, Tokyo 153-8914 Japan}
\footnotemark[2]
\thanks{The Institute of Statistical Mathematics, 10-3 Midori-cho, Tachikawa, Tokyo 190-8562, Japan}
\footnotemark[3]}
\begin{document}

\maketitle

\begin{abstract}

%by TH: 081717
We propose a novel estimation procedure for scale-by-scale lead-lag relationships of financial assets observed at high-frequency in a non-synchronous manner. The proposed estimation procedure does not require any interpolation processing of original datasets and is applicable to those with highest time resolution available. Consistency of the proposed estimators is shown under the continuous-time framework that has been developed in our previous work \cite{HK2016}. 
%An empirical application shows promising results of the proposed approach.
An empirical application to a quote dataset of the NASDAQ-100 assets identifies two types of lead-lag relationships at different time scales. 
%We propose a novel estimation procedure for scale-by-scale lead-lag relationships of financial assets observed at a high-frequency in a non-synchronous manner. The proposed estimation procedure does not require any interpolation processing of the original data and is applicable to quite fine resolution data. The validity of the proposed estimators is shown under the continuous-time framework developed in our previous work \cite{HK2016}. An empirical application shows promising results of the proposed approach.  
\vspace{3mm}

%\noindent \textit{Keywords}: High-frequency data; Lead-lag relationship; Multi-scale analysis; Non-synchronous data; Stochastic volatility; Wavelet.
\noindent \textit{Keywords}: Brownian motion; Cross-covariance estimation; Daubechies' wavelet filter; Non-synchronous data; Stochastic volatility; Wavelet.

\end{abstract}

\section{Introduction}

%by TH: 081717
A financial market accommodates a diversified groups of participants. They have different sources of money, different time horizons and different risk attitudes, with different quality and quantity of information. In \citet{MDDPOW1993} it is argued that such differences are engraved in price formation at each of distinct time scales. They can cause a {\it multi-scale} structure embedded in the financial market.
%A financial market consists of various participants. They have different perspectives on the markets and risks with different constraints and different amount of information. In \citet{MDDPOW1993} it is argued that such various differences are spelled out to a sensitivity to different time scales. Hence they cause a multi-scale structure of the financial market. 

%Multi-scale structures are ubiquitous in financial markets. This is caused by diversity of market participants who have different perspectives on the markets and risks with different constraints and different amount of information. In \citet{MDDPOW1993} it is argued that such various differences are spelled out to a sensitivity to different time scales. Hence they cause multi-scale structures of the financial markets.

%by TH: 081717
This paper intends to study such a multi-scale structure of financial markets that can exist in a very short time period. In particular, we are to investigate {\it lead-lag relationships} between financial assets by the use of high-frequency data. Identification of lead-lag relationships among assets is fundamentally important both for theoretical and practical perspectives; the existence of such relationships may mean the inefficiency of financial markets for theorists but it may also provide opportunities for market participants to earn ``excess'' profits. So that so, it is quite natural that lead-lag analysis has been conducted in the finance literature for a long time. Since 90's as high-frequency data has become more and more accessible, lead-lag relationships with high-frequency data have been studied by such authors as ~\cite{Chan1992,deJN1997,Reno2003,HA2014}. 
%The aim of this paper is to investigate such a multi-scale structure in high-frequency financial markets. In this paper we especially focus on lead-lag relationships between financial assets, which is known as a prominent stylized fact of high-frequency financial data (see e.g.~\cite{Chan1992,deJN1997,Reno2003,HA2014}).   
In the meantime, multi-scale analysis with high-frequency financial data has been carried out; e.g., \cite{MDDOPW1997,Hasbrouck2015,BV2015,Subbotin2008,GGSB2010}. However, main interest of most of these articles is the estimation of volatilities of assets. There is little work that conducts multi-scale analysis of lead-lag relationships in the high-frequency domain; one exception is \citet{Hafner2012} which has examined multi-scale structures of the lead-lag relationships between the returns, durations and volumes of high-frequency transaction data of the IBM stock. 
%Multi-scale analysis of high-frequency financial data has been performed in a number of articles such as \cite{MDDOPW1997,Hasbrouck2015,BV2015,Subbotin2008,GGSB2010}. These articles mainly focus on volatilities of assets. There are little work which conducts multi-scale analysis of lead-lag relationships for high-frequency financial data. One exception is \citet{Hafner2012} which has examined multi-scale structures of the lead-lag relationships between the returns, durations and volumes of high-frequency transaction data of the IBM stock. 
%Econometrics of high-frequency financial data is nowadays an extensive research area. 
% HA2014
%Lead-lag relationships are known as a relevant stylized fact in high-frequency financial markets.  
%This paper considers the problem of estimating lead-lag relationships from high-frequency data.  In particular, we consider multiscale analysis of lead-lag relationships. 
%There are several articles which investigate multi-scale structures of high-frequency financial data. For example, multi-scale analyses of volatilities have been performed in \citet{MDDOPW1997} and \citet{Hasbrouck2015}. Multi-scale structures of lead-lag relationships has also been examined in \citet{Hafner2012}. 
%

To our understanding, the main focus of those studies conducting multi-scale analysis is empirical application {\it per se}, not to develop a new estimation methodology. Their adopted approaches are theoretically based on ``classical'' discrete time series that appear to be more suitable for daily or lower frequency data with longer time horizons. 
On one hand, analysis of high-frequency financial data shall focus on a short time horizon, that is, one day or shorter. So, it is unclear whether one can reasonably apply such a ``classical'' method to high-frequency financial data without reservation. 
On the other hand, {\it continuous-time} modeling provides a convenient and powerful framework to analyze high-frequency data observed in a short horizon (cf.~\citet{AJ2014}). 
	% end: \textcolor{red}

With these in mind, in \cite{HK2016} the authors have developed a continuous-time framework that is designed specifically for multi-scale analysis of lead-lag relationships in high-frequency data. 
There, they introduce two Brownian motions $B^1$ and $B^2$ with a scale-by-scale correlation structure. 	% end: \textcolor{red}
More precisely, they have shown that, for any $R_j\in[-1,1]$ and $\theta_j\in\mathbb{R}$ ($j=0,1,\dots$), there exists a bivariate Gaussian process $B_t=(B^1_t,B^2_t)$ ($t\in\mathbb{R}$) with stationary increments such that
%In the meantime, these existing studies mainly focus on empirical applications and are theoretically based on classical discrete time series observed in a long time horizon. However, analysis of high-frequency financial data typically focuses on a short time horizon such as one day, so it is unclear whether one may apply such a theory to high-frequency financial data. On the other hand, it is nowadays well-recognized that continuous-time modeling provides a powerful tool to analyze high-frequency data observed in a short horizon (cf.~\citet{AJ2014}). Motivated by these reasons, in \cite{HK2016} the authors have developed a relevant continuous-time framework for multi-scale analysis of lead-lag relationships in high-frequency data by introducing two Brownian motions $B^1$ and $B^2$ with a scale-by-scale correlation structure. More precisely, they have shown that, for any $R_j\in[-1,1]$ and $\theta_j\in\mathbb{R}$ ($j=0,1,\dots$), there exists a bivariate Gaussian process $B_t=(B^1_t,B^2_t)$ ($t\in\mathbb{R}$) with stationary increments such that
\begin{enumerate}[label=(\Roman*)]

\item\label{bm} both $B^1$ and $B^2$ are two-sided Brownian motions,

\item\label{cross-spec} the cross-spectral density of $B$ is given by
\begin{equation}\label{csd}
f(\lambda)=\sum_{j=0}^\infty R_je^{-\sqrt{-1}\theta_j\lambda}1_{\Lambda_j}(\lambda),\qquad\lambda\in\mathbb{R},
\end{equation}
where $\Lambda_j=[-2^{j+1}\pi,-2^{j}\pi)\cup(2^{j}\pi,2^{j+1}\pi]$ for every $j\in\mathbb{Z}$.

\end{enumerate}
The frequency band $\Lambda_j$ corresponds to the time scale between $2^{-j}$ and $2^{-j+1}$ in the time domain. Also, note that, if $W_t=(W^1_t,W^2_t)$ ($t\in\mathbb{R}$) is a two-sided bivariate Brownian motion with correlation $R$, for $\theta\in\mathbb{R}$ the process $(W^1_t,W^2_{t-\theta})$ ($t\in\mathbb{R}$) has the cross-spectral density $Re^{-\sqrt{-1}\theta\lambda}$ ($\lambda\in\mathbb{R}$). Therefore, we can consider that $B^1$ and $B^2$ have a lead-lag relationship with the time-lag $\theta_j$ in the time scale between $2^{-j}$ and $2^{-j+1}$. Hence, under this model we can understand the multi-scale structure of the lead-lag relationships by estimating the parameters $\theta_j$ from observation data.

%The main contribution of this paper is to develop a novel estimation procedure for the parameters $\theta_j$ based on non-synchronous observations of (volatility-modulated versions of) $B^1$ and $B^2$. Although such a procedure has already been proposed in \cite{HK2016}, their procedure contains data interpolation onto the grid with the finest observable resolution and is computationally challenging if we are interested in data recorded with sub-second precision. In fact, in a situation where the finest observable resolution is one micro-second we need to store one million observations per one second. 
%by TH: 081717
The main contribution of this paper is to develop a novel estimation procedure for the parameters $\theta_j$ based on {\it non-synchronous} observations of (volatility-modulated versions of) $B^1$ and $B^2$. 
%Although a procedure has already been proposed by the current authors (\cite{HK2016}), it requires data interpolation in accordance with a regular grid with size equated to the finest time resolution, and hence is computationally formidable if we are analyzing a dataset with, for instance, one micro second time precision. 
%Even in such a situation the proposed procedure in this paper is free from any interpolation processing of the original data and is applicable. 
%A numerical experiment also shows that the new estimators seem to have better (finite sample) performance than the interpolation-based estimator when the sampling times are non-synchronous to a reasonable degree. 
%Although \red{another} procedure was proposed by the current authors (\cite{HK2016}), 
In the above mentioned \cite{HK2016} the authors proposed another estimation procedure, which required data interpolation in accordance with a regular grid with size equated to the finest time resolution at which the fastest market participants (will) act. When analyzing a dataset with sub mili-second time precision, one typically wishes to let this finest resolution coarser than the actual time precision (see Section \ref{section:empirical} for instance). If so, such an \textit{intermediary} data interpolation step can inevitably discard a large amount of data. 
Even in such a situation the newly proposed procedure in this paper is free from any interpolation processing of the original data and able to efficiently use them. 
Besides, a theoretical consideration along with numerical experiments suggests that the new estimators can potentially have better performance than the interpolation-based estimator when the sampling times are non-synchronous to a reasonable degree. 
An empirical application with a NASDAQ-100 dataset identifies two types of lead-lag relationships at different time scales. At the best of our knowledge, this observation is new in the empirical literature, indicating potential usefulness of the new estimation methodology.
%exhibit high degree of the non-synchronicity. 
%Our novel procedure is free from any interpolation processing of the original data and is applicable to data with record times of sub-second precision. A numerical experiment also shows that our new estimators have a superior performance to the interpolation-based estimator when the sampling times exhibit high degree of the non-synchronicity. 

The rest of the paper is organized as follows. In Section \ref{section:setting} we present the theoretical setting considered in this paper in details. Our new estimation procedure is described in Section \ref{section:estimation}. We develop an asymptotic theory associated with the proposed estimators in Section \ref{section:asymptotic}.
%by TH: 081717
In Section \ref{section:simulation} we assess finite sample performance of the proposed estimators by Monte Carlo experiments, and in Section \ref{section:empirical} we apply our procedure to empirical datasets. Section \ref{section:conclusion} concludes the paper. All the proofs are collected in Section \ref{section:proof}.
%In Section \ref{section:simulation} we assess the practical performance of the proposed estimators by a Monte Carlo study, and in Section \ref{section:empirical} we apply our procedure to a set of market data. Section \ref{section:conclusion} concludes the paper. All the proofs are collected in Section \ref{section:proof}.    

\section{Setting}\label{section:setting}

%by TH: 081717
%High-frequency financial data consists of prices, as well as other relevant quantities, accompanied by the timestamps at which the activities took place. 
%%%%%
%Throughout the paper we focus on situations where the resolution of the timestamps of datasets is {\it very} small.  
%We let the time resolution correspond to $\tau_N:=2^{-N-1}$ for some $N\in\mathbb{N}$. We will develop an asymptotic theory in the high-frequency setting, i.e.,  when $N$ tends to infinity, or the time resolution shrinks to zero.
We let the finest time resolution correspond to $\tau_N:=2^{-N-1}$ for some $N\in\mathbb{N}$. We suppose $\tau_N$ is comparable to the observation frequency of data. We will develop an asymptotic theory in the high-frequency setting, i.e.,  when $N$ tends to infinity, or the time resolution shrinks to zero, while the length of the whole observation interval stays fixed.
%Any high-frequency financial data have the finest observable resolution. We let it correspond to $\tau_N:=2^{-N-1}$ for some $N\in\mathbb{N}$. To derive our theoretical results, we consider the asymptotic theory such that $N$ tends to infinity. Namely, we focus on situations where the finest observable resolution is very fine.  

As mentioned in the Introduction, our theoretical framework is based on a bivariate Gaussian process $B_t=(B^1_t,B^2_t)$ ($t\in\mathbb{R}$) with stationary increments satisfying properties \ref{bm}--\ref{cross-spec}. Since we are mainly interested in the lead-lag relationships at scales close to the finest time resolution, it is convenient to ``relabel'' indices of the parameters $R_j$ and $\theta_j$ in \eqref{csd} so that the finest resolution $\tau_N$ corresponds to the level $j=1$ while we consider the asymptotic theory such that $N$ tends to infinity. For this reason, as in \cite{HK2016} we replace property \ref{cross-spec} with the following one: The cross-spectral density of $B$ is given by
\begin{equation}\label{asymptotic}
f_N(\lambda)=\sum_{j=1}^{N+1}R_je^{-\sqrt{-1}\theta_j\lambda}1_{\Lambda_{\tcr{N-j+1}}}(\lambda),\qquad\lambda\in\mathbb{R}.
\end{equation}
We also assume that $\theta_j\in(-\delta,\delta)$ for every $j$ with some $\delta>0$. 

Now, for each $\nu=1,2$, we consider the log price process $X^\nu=(X^\nu_t)_{t\geq0}$ of the $\nu$-th asset given by
\begin{equation}\label{model}
X^\nu_t=X^\nu_0+\int_0^t\sigma^\nu_sdB^\nu_s,\qquad t\geq0,
\end{equation}
% by TH: 081717
where $(\sigma^\nu_t)_{t\geq0}$ is a c\`adl\`ag process adapted to the filtration $(\mathcal{F}^\nu_t)$ such that the process $(B^\nu_t)$ is, respectively, a one-dimensional $(\mathcal{F}^\nu_t)$-Brownian motion. We observe the process $X^\nu$ on the interval $[0,T+\delta]$ at the sampling times $0\leq t^\nu_0<t^\nu_1<\cdots<t^\nu_{n_\nu}\leq T+\delta$. The sampling times $(t^1_i)_{i=0}^{n_1}$ and $(t^2_i)_{i=0}^{n_2}$ are random variables which are independent of $(X^1,X^2)$ and implicitly depend on $N$ such that
%where $(\sigma^\nu_t)_{t\geq0}$ is a c\`adl\`ag process adapted to the filtration $(\mathcal{F}^\nu_t)$ such that the process $(B^\nu_t)$ is an $(\mathcal{F}^\nu_t)$-Brownian motion. We observe the process $X^\nu$ on the interval $[0,T+\delta]$ at the sampling times $0\leq t^\nu_0<t^\nu_1<\cdots<t^\nu_{n_\nu}\leq T+\delta$. The sampling times $(t^1_i)_{i=0}^{n_1}$ and $(t^2_i)_{i=0}^{n_2}$ are random variables which are independent of $(X^1,X^2)$ and implicitly depend on $N$ such that
\[
r_N:=\max_{\nu=1,2}\max_{i=0,1,\dots,n_\nu+1}(t^\nu_i-t^\nu_{i-1})\to^p0
\]
as $N\to\infty$, where we set $t^\nu_{-1}:=0$ and $t^\nu_{n_\nu+1}:=T+\delta$ for each $\nu=1,2$.

\begin{rmk}
Our model is generally not a semimartingale, so it is generally not free of arbitrage in the absence of market frictions due to the well-known fundamental theorem of asset pricing (see e.g.~\cite{DS1994}). 
However, if we take account of market frictions such as discrete trading or transaction costs, we can show that our model has no arbitrage; see \cite{HK2017arb} for details.  
\end{rmk}

\section{Construction of the estimators}\label{section:estimation}

Our aim is to estimate the parameters $\theta_j$ for each $j$ based on discrete observation data $(X^1_{t_i^1})_{i=0}^{n_1}$ and $(X^2_{t_i^2})_{i=0}^{n_2}$. We begin by introducing some notation. For each $\nu=1,2$, we associate the observation times $(t^\nu_i)_{i=0}^{n_\nu}$ with the collection of intervals $\Pi^\nu_N=\{(t^\nu_{i-1},t^\nu_i]:i=1,\dots,n_\nu\}$. We will systematically employ the notation $I$ (resp.~$J$) for an element of $\Pi^1_N$ (resp.~$\Pi^2_N$). 

For an interval $H\subset[0,\infty)$, we set $\overline{H}=\sup H$, $\underline{H}=\inf H$, $|H|=\overline{H}-\underline{H}$. In addition, we set $V(H)=V_{\overline{H}}-V_{\underline{H}}$ for a a stochastic process $(V_t)_{t\geq0}$, and $H_\theta=H+\theta$ for a real number $\theta$.

Now we explain how to construct our estimators. To explain the idea behind the construction, we focus on the case of $\sigma^\nu_s\equiv1$ for $\nu=1,2$. The parameter $\theta_j$ is the unique maximizer of the scale-by-scale cross-covariance function $\rho_{\tcr{N-j+1}}(\theta)$ between $B^1$ and $B^2$, which is defined by
\[
\rho_{\tcr{N-j+1}}(\theta)=E\left[\left(\int_{-\infty}^\infty\psi^{LP}_{\tcr{N-j+1}}(s-u)dB^1_s\right)\left(\int_{-\infty}^\infty\psi^{LP}_{\tcr{N-j+1}}(s-u-\theta)dB^2_s\right)\right],\qquad\theta\in\mathbb{R},
\]
where $\psi^{LP}_{\tcr{k}}(s)=2^{\tcr{k}/2}\psi^{LP}(2^{\tcr{k}}s)$ for $k\in\mathbb{Z}$ and $\psi^{LP}$ denotes the Littlewood-Paley wavelet:
\[
\psi^{LP}(s)=(\pi s)^{-1}(\sin(2\pi s)-\sin(\pi s))
\]
% by TH: 081717
(see Sections 2.2--2.3 of \cite{HK2016} for details). Motivated by this fact, we first construct a sensible {\it covariance} estimator $\widehat{\rho}_{\tcr{N-j+1}}(\theta)$ for $\rho_{\tcr{N-j+1}}(\theta)$, and then construct the {\it lead-lag} estimator $\widehat{\theta}_{j}$ for $\theta_j$ as a maximizer of $|\widehat{\rho}_{\tcr{N-j+1}}(\theta)|$ as in \cite{HRY2013}.
%(see Sections 2.2--2.3 of \cite{HK2016} for details). Motivated by this fact, we first construct a sensible estimator for $\rho_{\tcr{N-j+1}}(\theta)$, and then construct the estimator for $\theta_j$ as a maximizer of $|\widehat{\rho}_{\tcr{N-j+1}}(\theta)|$ as in \cite{HRY2013}.
The idea behind the construction of the estimator $\widehat{\rho}_{\tcr{N-j+1}}(\theta)$ is as follows. Let $U^N(\theta)$ be the inverse Fourier transform of $f_N(\lambda)$. Then we have
%The idea behind the construction of the estimator for $\rho_{\tcr{N-j+1}}(\theta)$ is as follows. Let $U^N(\theta)$ be the inverse Fourier transform of $f_N(\lambda)$. Then we have
\[
\rho_{\tcr{N-j+1}}(\theta)=2^{-\frac{\tcr{N-j+1}}{2}}(U^N*\psi^{LP}_{\tcr{N-j+1}})(\theta)=\int_{-\infty}^\infty U^N(\theta-s)\psi^{LP}(2^{\tcr{N-j+1}}s)ds
\]
by the convolution theorem. This suggests us to consider the following estimator for $\rho_{\tcr{N-j+1}}(\theta)$:
\[
\widehat{\rho}_{\tcr{N-j+1}}(\theta):=\widehat{\rho}_{\tcr{N-j+1}}^N(\theta)=\sum_{l=-L_j+1}^{L_j-1}\widehat{U}^N(\theta-l\tau_N)\Psi_{j}(l),	% by TH:081717
%byTH: \widehat{\rho}_{\tcr{N-j+1}}(\theta)=\sum_{l=-L_j+1}^{L_j-1}\widehat{U}^N(\theta-l\tau_J)\Psi_{j}(l),
\] 
where $\widehat{U}^N(\theta)$ is an estimator for $U^N(\theta)$ and $\Psi_{j}(l)$ is an approximation of $\psi^{LP}(2^{\tcr{N-j+1}}l\tau_N)$ (it turns out that the factor $\tau_N$ corresponding to $ds$ is unnecessary because $2^{\tcr{N-j+1}}\tau_N=2^{-j}$ does not tend to 0 in our asymptotic setting), 
both of 	% by TH: 081717
which are explicitly defined in the following. Since $U^N(\theta)$ may be regarded as the ``cross-covariance function between $dB^1$ and $dB^2$'', we adopt the following estimator introduced in \citet{HRY2013} as $\widehat{U}^N(\theta)$: %($\widehat{U}_N(\theta)$: shifted HY estimator)
\[
\widehat{U}^N(\theta)=
\left\{\begin{array}{ll}
\sum_{I\in\Pi^1_N,J\in\Pi^2_N:\overline{I}\leq T}X^1(I)X^2(J)K(I,J_{-\theta})&\text{if }\theta\geq0,\\
\sum_{I\in\Pi^1_N,J\in\Pi^2_N:\overline{J}\leq T}X^1(I)X^2(J)K(I_\theta,J)&\text{if }\theta<0,
\end{array}\right.
\]
where we set $K(I,J)=1_{\{I\cap J\neq\emptyset\}}$ for two intervals $I$ and $J$. This $\widehat{U}^N(\theta)$ can be regarded as the empirical cross-covariance estimator between the returns of $X^1$ and $X^2$ at the lag $\theta$ computed by \citet{HY2005}'s method to handle the non-synchronous sampling times. In the meantime, the  Fourier inversion formula yields
\[
\psi^{LP}(2^{\tcr{N-j+1}}l\tau_N)=\frac{2^j\tau_N}{2\pi}\int_{-\infty}^\infty e^{\sqrt{-1}l\tau_N\lambda}1_{\Lambda_{N-j+1}}(\lambda)d\lambda
=\frac{1}{2\pi}\int_{-\pi}^\pi e^{\sqrt{-1}l\lambda}\cdot2^j1_{\Lambda_{-j}}(\lambda)d\lambda,
\]
so the transfer function of $(\psi^{LP}(2^{\tcr{N-j+1}}l\tau_N))_{l\in\mathbb{Z}}$ is $2^j1_{\Lambda_{-j}}(\lambda)$. In particular, $\Psi_{j}(l)$ well approximates $\psi^{LP}(2^{\tcr{N-j+1}}l\tau_N)$ if the transfer function of $(\Psi_j(l))_{l=-L_j+1}^{L_j-1}$ well approximates $2^j1_{\Lambda_{-j}}(\lambda)$. We construct such a sequence $(\Psi_j(l))_{l=-L_j+1}^{L_j-1}$ from Daubechies' wavelet filter as follows. We refer to Section 4.8 of \cite{PW2000} for details about Daubechies' wavelets (see also Appendix \ref{sec:appendix}). 
Let $(h_p)_{p=0}^{L-1}$ be Daubechies' wavelet filter of (even) length $L$ whose power transfer function $H_L(\lambda)=|\sum_{p=0}^{L-1}h_{p}e^{-\sqrt{-1}\lambda p}|^2$ is given by
\[
H_L(\lambda)=2\sin^L(\lambda/2)\sum_{p=0}^{L/2-1}\binom{L/2-1+p}{p}\cos^{2p}(\lambda/2),\qquad\lambda\in\mathbb{R}.
\]
The associated scaling filter\footnote{We use the notation that $(h_p)$ denotes the wavelet filter and $(g_p)$ denotes the scaling filter following \cite{PW2000}. Note that the reverse notation is often used in the literature.} $(g_p)_{p=0}^{L-1}$ is defined via the quadrature mirror relationship as $g_p=(-1)^{p+1}h_{L-p-1}$, $p=0,1,\dots,L-1$, hence its power transfer function $G_L(\lambda)=|\sum_{p=0}^{L-1}g_{p}e^{-\sqrt{-1}\lambda p}|^2$ satisfies $G_L(\lambda)=H_L(\lambda-\pi)$. Then, for every $j$ we construct the associated level $j$ wavelet filter $(h_{j,p})_{p=0}^{L_j-1}$ recursively by $h_{1,p}=h_p$ for $p=0,1,\dots,L_1-1$ and $h_{j,p}=\sum_{q=0}^{L_{j-1}-1}g_{p-2q}h_{j-1,q}$ for $p=0,1,\dots,L_j-1$, where $L_j=(2^j-1)(L-1)+1$ and $g_p=0$ for $p\notin\{0,1,\dots,L-1\}$. Now we define the sequence $(\Psi_j(l))_{l=-L_j+1}^{L_j-1}$ by
%Let $h_{j,0},h_{j,1},\dots,h_{j,L_j-1}$ be Daubechies' wavelet filters with length $L$ at the level $j$ ($L_j=(2^j-1)(L-1)+1$). It is known that the power transfer function $H_{j,L}(\lambda)=|\sum_{p=0}^{L_j-1}h_{j,p}e^{-\sqrt{-1}\lambda p}|^2$ well approximates $1_{\Lambda_{-j}}(\lambda)$ as $L\to\infty$ (cf.~\citet{Lai1995}). Therefore, we may use
\[
\Psi_j(l)=\sum_{p=0}^{L_j-1-|l|}h_{j,p}h_{j,p+|l|},\qquad l=0,\pm1,\dots,\pm(L_j-1).
\]
These quantities are identical to the \textit{autocorrelation wavelets} from \citet{NvSK2000} (see Definition 3 from \cite{NvSK2000}). The transfer function $H_{j,L}(\lambda)=\sum_{l=-L_j+1}^{L_j-1}\Psi_j(l)e^{-\sqrt{-1}l\lambda}$ of $(\Psi_j(l))_{l=-L_j+1}^{L_j-1}$ is given by
\begin{equation*}%\label{hj-formula}
H_{j,L}(\lambda)=H_L(2^{j-1}\lambda)\prod_{i=0}^{j-2}G_L(2^i\lambda),\qquad\lambda\in\mathbb{R}
\end{equation*}
(see Eq.(28) from \cite{NvSK2000}). 
In particular, $H_{j,L}(\lambda)$ well approximates $2^j1_{\Lambda_{-j}}(\lambda)$ as $L\to\infty$ (see \eqref{lai}) and thus $\Psi_j(l)$ may be used an approximation of $\psi^{LP}(2^{\tcr{N-j+1}}l\tau_N)$. %\citet{NvSK2000} call these quantities  the \textit{autocorrelation wavelets}.\footnote{Given the length of Daubechies' wavelet filters, we still have several options of $(h_{j,p})_{p=0}^{L_j-1}$ such as the \textit{external phase wavelet} and the \textit{least asymmetric wavelet} (see Chapter 8 of \cite{Daubechies1992} for details). However, all of them have the same power transfer function by definition, hence $(\Psi_j(l))_{l=-L_j+1}^{L_j-1}$ only depends on the length $L$ of Daubechies' wavelet filters.}
Finally, for every $j\in\mathbb{N}$ we define the estimator $\widehat{\theta}_j:=\widehat{\theta}_j^N$ for $\theta_j$ as a solution of the following equation:
%byTH:081717Finally, for every $j\in\mathbb{N}$ we define the estimator $\widehat{\theta}_j$ for $\theta_j$ as a solution of the following equation:
\[
\left|\widehat{\rho}_{\tcr{N-j+1}}(\widehat{\theta}_j)\right|=\max_{\theta\in\mathcal{G}^N}\left|\widehat{\rho}_{\tcr{N-j+1}}(\theta)\right|.
\]
Here, we maximize the function $\widehat{\rho}_{\tcr{N-j+1}}(\theta)$ regarding $\theta$ over the finite grid
\[
\mathcal{G}^N=\{l\tau_N:l\in\mathbb{Z},|l|\leq \Gamma_N\}
\]
with some positive integer $\Gamma_N$ as in \cite{HRY2013}.

\if0
\begin{align*}
E^\Pi\left[X^1(I)X^2(J)\right]
&=\frac{1}{2\pi}\int_{-\infty}^\infty\frac{(e^{-\sqrt{-1}\bar{I}\lambda}-e^{-\sqrt{-1}\underline{I}\lambda})(e^{\sqrt{-1}\bar{J}\lambda}-e^{\sqrt{-1}\underline{J}\lambda})}{\lambda^2}f_J(\lambda)d\lambda\\
&=\frac{1}{2\pi}\int_{-\infty}^\infty \frac{(e^{-\sqrt{-1}|I|\lambda}-1)(e^{\sqrt{-1}|J|\lambda}-1)}{\lambda^2}e^{\sqrt{-1}(\underline{J}-\underline{I})\lambda}f_J(\lambda)d\lambda
\end{align*}

\begin{align*}
E^\Pi\left[U_N(\theta)\right]
&=\sum_{I,J}K(I,J_{-\theta})E^\Pi\left[X^1(I)X^2(J)\right]\\
&=\frac{1}{2\pi}\int_{-\infty}^\infty\left\{\tau_N^{-1}\sum_{I,J}K(I,J_{-\theta})(\mathcal{F}1_I)(\lambda/\tau_N)(\mathcal{F}1_J)(-\lambda/\tau_N)\right\}f_J(\lambda/\tau_N)d\lambda
\end{align*}
\fi

\begin{rmk}
Given the length $L$ of Daubechies' wavelet filter, we still have several options of $(h_{p})_{p=0}^{L-1}$ such as the \textit{external phase wavelet} and the \textit{least asymmetric wavelet} (cf.~Section 4.8 of \cite{PW2000}). However, all of them have the same power transfer function $H_L(\lambda)$ by definition, so $(\Psi_j(l))_{l=-L_j+1}^{L_j-1}$ only depends on the length $L$ of Daubechies' wavelet filters.
\end{rmk}

\section{Asymptotic theory}\label{section:asymptotic}

For a function $f\in L^1(\mathbb{R})$, we denote by $\mathcal{F}f$ the Fourier transform of $f$:
\[
(\mathcal{F}f)(\lambda)=\int_{-\infty}^\infty f(t)e^{-\sqrt{-1}\lambda t}dt,\qquad\lambda\in\mathbb{R}.
\]
We impose the following conditions to derive our asymptotic results.
\begin{assumption}\label{volatility}
There is a constant $\gamma\in(0,1]$ such that $\sigma^\nu$ almost surely has $\gamma$-H\"older continuous sample paths for every $\nu=1,2$.
\end{assumption}

\begin{assumption}\label{sampling}
(i) $r_N=O_p(\tau_N^\xi)$ as $N\to\infty$ for any $\xi\in(0,1)$.

\noindent(ii) There are constants $\alpha>1$, $\beta\in(0,1)$, $Q>1$ and an absolutely continuous real-valued function $D$ on $[-\pi,\pi]$ such that
\begin{align*}
\tau_{m}\sum_{k=0}^{\lceil T\tau_{m}^{-1}\rceil-1}\int_{-\pi}^\pi E\left[\left|D^N_k(\lambda,\theta_N)-D(\lambda)\right|^Q\right]d\lambda
=O(\tau_{N}^{\alpha})
\end{align*}
as $N\to\infty$ for any sequence $(\theta_N)$ of real numbers satisfying $\theta_N\in\mathcal{G}^N$ for every $N$, where $m=\lceil \beta N\rceil$,
\[
D^N_k(\lambda,\theta)=
\left\{\begin{array}{ll}
\frac{1}{2\pi\tau_{m}\tau_N}\sum_{I,J:\underline{I}\in I_m(k)}(\mathcal{F}1_{I})(\lambda/\tau_N)(\mathcal{F}1_{J_{-\theta}})(-\lambda/\tau_N)K(I,J_{-\theta}) & \text{if }\theta\geq0,\\
\frac{1}{2\pi\tau_{m}\tau_N}\sum_{I,J:\underline{J}\in I_m(k)}(\mathcal{F}1_{I_\theta})(\lambda/\tau_N)(\mathcal{F}1_{J})(-\lambda/\tau_N)K(I_\theta,J) & \text{if }\theta<0
\end{array}\right.
\]
and $I_m(k)=[k\tau_m,(k+1)\tau_m)$. Moreover, $D(\lambda)>0$ for almost all $\lambda\in[-\pi,\pi]$ and $D,D'\in L^\infty(-\pi,\pi)$.

% YK, Feb. 20, 2020
%\noindent(iii) There is a constant $a>0$ such that $\inf_{I\in\Pi^1_N}|I|\geq a\tau_N$ and $\inf_{J\in\Pi^2_N}|J|\geq a\tau_N$ for every $N$.
\end{assumption}

\if0
\begin{enumerate}[label={\normalfont[O]}]

\item\label{hypo:sampling} (i) $r_N=O_p(\tau_N^\xi)$ as $N\to\infty$ for any $\xi\in(0,1)$.

(ii) There are constants $\alpha>0$, $\beta\in(0,1)$ and an absolutely continuous real-valued function $F$ on $[-\pi,\pi]$ such that
\begin{align*}
\tau_{m}\sum_{k=0}^{\lceil T\tau_{m}^{-1}\rceil-1}\int_{-\pi}^\pi E\left[\left|F^N_k(\lambda,\theta_N)-F(\lambda)\right|^p\right]d\lambda
=O((\tau_{N}/\tau_{m})^{\alpha p})
\end{align*}
as $N\to\infty$ for any $p>1$ and any sequence $(\theta_N)$ of real numbers satisfying $\theta_N\in\mathcal{G}^N$ for every $N$, where $m=\lceil \beta N\rceil$,
\[
F^N_k(\lambda,\theta)=
\left\{\begin{array}{ll}
\tau_{m}^{-1}\tau_N^{-1}\sum_{I,J:\underline{I}\in I_m(k)}(\mathcal{F}1_I)(\lambda/\tau_N)(\mathcal{F}1_{J_{-\theta}})(-\lambda/\tau_N)K(I,J_{-\theta}) & \text{if }\theta\geq0,\\
\tau_{m}^{-1}\tau_N^{-1}\sum_{I,J:\underline{J}\in I_m(k)}(\mathcal{F}1_{I_\theta})(\lambda/\tau_N)(\mathcal{F}1_{J})(-\lambda/\tau_N)K(I_\theta,J) & \text{if }\theta<0
\end{array}\right.
\]
and $I_m(k)=[k\tau_m,(k+1)\tau_m)$. Moreover, $F(\lambda)>0$ for almost all $\lambda\in[-\pi,\pi]$ and $F'\in L^\infty(-\pi,\pi)$.

\end{enumerate}
\fi

The simplest situation where Assumption \ref{sampling} is satisfied is the equidistant and synchronous sampling case such that $t^1_i=t^2_i=i\tau_N/a$ for every $i$ with some $a\in\mathbb{N}$. In this case one can easily see that  
\[
D^N_k(\lambda,\theta)=\frac{a}{2\pi }\left|\frac{e^{-\sqrt{-1}\lambda/a}-1}{\lambda}\right|^2
\]
for any $\theta\in\mathcal{G}^N$. Hence, Assumption \ref{sampling} is satisfied with $D(\lambda)$ being the quantity in the right side of the above equation. Another example is \citet{LM1990}'s sampling scheme as described by the following proposition:
\begin{proposition}\label{prop:bernoulli}
Let $a\in\mathbb{N}$. 
Suppose that, for each $\nu=1,2$, the observation times $(t^\nu_i)_{i=0}^{n_\nu}$ are randomly chosen from $\{i\tau_N/a:i=0,1,\dots,\lfloor(T+\delta)a\tau_N^{-1}\rfloor\}$ using Bernoulli trials with success probability $1-\pi_\nu$ ($0\leq\pi_\nu<1$). Then, Assumption \ref{sampling} is satisfied with
\begin{align}
D(\lambda)&=\frac{a}{\pi\lambda^2}\Re\left[\frac{(1-\pi_1)(1-\pi_2)(1-e^{-\sqrt{-1}\lambda/a})}{(1-\pi_1e^{-\sqrt{-1}\lambda/a})(1-\pi_2e^{-\sqrt{-1}\lambda/a})}\right]\nonumber\\
&=\frac{a(1-\cos(\lambda/a))}{\pi\lambda^2}\frac{(1-\pi_1)(1-\pi_2)(1+\pi_1+\pi_2-\pi_1\pi_2(2\cos(\lambda/a)+1))}{|1-\pi_1e^{-\sqrt{-1}\lambda/a}|^2|1-\pi_2e^{-\sqrt{-1}\lambda/a}|^2}.\label{eq:bernoulli}
\end{align}
\end{proposition}

%Now we state our asymptotic results. 
Now we state asymptotic results. % by TH: 081717
The first result concerns the asymptotic behavior of the estimators $\widehat{\rho}_{\tcr{N-j+1}}(\theta)$ and can be considered as a counterpart of Propositions 3--4 in \cite{HRY2013}:
\begin{theorem}\label{theorem:main}
Let $j$ be a positive integer. % by YK: 082317
Suppose that Assumptions \ref{volatility}--\ref{sampling} are satisfied. 
Suppose also that $L\to\infty$ and $(L\tau_N)^\gamma\log L\to0$ as $N\to\infty$ and that $(\Gamma_N+L_j)\tau_N<\delta$ for every $N$.
\begin{enumerate}[label={\normalfont(\alph*)}]

\item If a sequence $v_N>0$ satisfies $\tau_N^{-1}v_N\to\infty$ as $N\to\infty$, then
\[
\max_{\theta\in\mathcal{G}^N:|\theta-\theta_j|\geq v_N}\left|\widehat{\rho}_{\tcr{N-j+1}}(\theta)\right|\to^p0
\]
as $N\to\infty$.%\textcolor{red}{, for every $j$}.	%by TH:081717.

\item Let $(\vartheta_N)$ be a sequence of real numbers such that $\vartheta_N\in\mathcal{G}^N$ and $\tau_N^{-1}(\vartheta_N-\theta_j)\to b$ as $N\to\infty$ for some $b\in\mathbb{R}$. 
%\textcolor{red}{for every $j$}.	% by TH: 081717. 
Then 
\[
\widehat{\rho}_{\tcr{N-j+1}}(\vartheta_N)\to^p2^j\Sigma_T(\theta_j)R_{j}\int_{\Lambda_{-j}}D(\lambda)\cos(b\lambda)d\lambda
\]
as $N\to\infty$, %\textcolor{red}{for every $j$},	%by TH:081717
where
\[
\Sigma_T(\theta)
=\left\{
\begin{array}{ll}
\int_0^{T}\sigma^1_s\sigma^2_{s+\theta}ds  & \text{if }\theta\geq0,  \\
\int_0^{T}\sigma^1_{s-\theta}\sigma^2_{s}ds  & \text{otherwise}.
\end{array}
\right.
\]
\end{enumerate}
\end{theorem}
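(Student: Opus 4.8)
The plan is to condition throughout on the sampling times $(\Pi^1_N,\Pi^2_N)$, which by assumption are independent of $(X^1,X^2)$, and to split $\widehat{\rho}_{(j)}(\theta)$ into its conditional mean and a mean-zero fluctuation, treating the two by separate arguments. Write $E_\Pi$ for the conditional expectation given $(\Pi^1_N,\Pi^2_N)$. The backbone is a separation of three scales: the finest resolution $\tau_N$, the wavelet window $L_j\tau_N=((2^j-1)(L-1)+1)\tau_N$, and the volatility block length $T\tau_m$ with $m=\lceil\beta N\rceil$. The hypotheses $L\to\infty$ and $L\tau_N^\kappa\to0$ (for every $\kappa>0$) guarantee the ordering $\tau_N\ll L_j\tau_N\ll\tau_m\ll1$, and it is precisely this ordering that makes the time-localization of the volatility compatible with the frequency-localization performed by the wavelet filter.

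First I would dispose of the volatility. By Assumption \ref{volatility} the paths of $\sigma^\nu$ are $\gamma$-H\"older, so on each block $I_m(k)$ I would freeze $\sigma^1$ and $\sigma^2$ at a common left endpoint; since the effective smoothing window $L_j\tau_N$ is negligible relative to the block length $T\tau_m$, replacing $\sigma^\nu_s$ by its frozen value contributes an error controlled by $r_N$ and the H\"older modulus, which vanishes by Assumption \ref{sampling}(i). This reduces matters to a piecewise constant-volatility problem in which the products $X^1(I)X^2(J)$ are conditionally Gaussian. The overlap kernel $K(I,J_{-\theta})$ with $\theta$ near $\theta_j$ pairs a return of $X^1$ near $s$ with a return of $X^2$ near $s+\theta_j$, so summing the frozen coefficients $\sigma^1_{t_k}\sigma^2_{t_k+\theta_j}$ over blocks produces a Riemann sum converging to $\Sigma_T(\theta_j)=\int_0^T\sigma^1_s\sigma^2_{s+\theta_j}\,ds$ (and its mirror when $\theta_j<0$).

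Next I would compute the frozen conditional mean through the spectral representation of the cross-covariance: on a block where the volatilities are frozen at constants $\sigma^1,\sigma^2$,
\[
E_\Pi\big[X^1(I)X^2(J)\big]=\frac{\sigma^1\sigma^2}{2\pi}\int_{-\infty}^\infty(\mathcal{F}1_I)(\lambda)(\mathcal{F}1_J)(-\lambda)\,f_N(\lambda)\,d\lambda.
\]
Inserting this into $\widehat{\rho}_{(j)}(\theta)=\sum_l\Psi_j(l)\widehat{U}^N(\theta-l\tau_N)$ and rescaling $\mu=\tau_N\lambda$, the support of $f_N$ collapses onto $\Lambda_{-j}\subset[-\pi,\pi]$, where it equals $R_je^{-\sqrt{-1}\theta_j\lambda}$; the sampling part organizes into the blockwise averages $D^N_k(\mu,\cdot)$, the sum over $l$ yields the transfer function $H_{j,L}(\mu)$, and the lag misalignment yields a phase $e^{\sqrt{-1}(\theta-\theta_j)\mu/\tau_N}$. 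I would then invoke Assumption \ref{sampling}(ii) to replace $\sum_k D^N_k(\mu,\theta)$ by $D(\mu)$ in $L^Q$ (the $O(\tau_N^\alpha)$ bound with $\alpha>1$ absorbs the block count $\lceil T\tau_m^{-1}\rceil$), and the approximation $H_{j,L}(\mu)\to 2^j1_{\Lambda_{-j}}(\mu)$ from \citet{Lai1995} to collapse the integral onto $\Lambda_{-j}$ with the factor $2^j$. For part (b), with $\tau_N^{-1}(\vartheta_N-\theta_j)\to b$ the phase tends to $e^{\sqrt{-1}b\mu}$; since $\widehat{\rho}_{(j)}$ is real and $\Lambda_{-j}$ is symmetric, the imaginary part cancels and the limit is exactly $2^j\Sigma_T(\theta_j)R_j\int_{\Lambda_{-j}}D(\mu)\cos(b\mu)\,d\mu$, while the fluctuation at the single point $\vartheta_N$ is shown to be $o_p(1)$ by a second-moment bound on the quadratic form.

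For part (a) I would establish two uniform statements over the grid. On the deterministic side the conditional mean is the same integral with $b$ replaced by $(\theta-\theta_j)/\tau_N$; since $D$ is absolutely continuous with $D'\in L^\infty$ and the weights satisfy $\sum_l\Psi_j(l)=H_{j,L}(0)=0$ (as $H_L(0)=0$), its tail is small once $|\theta-\theta_j|$ exceeds the effective width of the window, and the threshold $v_N\gg L^{1/2}\tau_N$ is exactly what makes this tail---together with the finite-$L$ leakage of $H_{j,L}$ away from $2^j1_{\Lambda_{-j}}$---uniformly negligible on $\{|\theta-\theta_j|\ge v_N\}$. On the stochastic side I would bound a $Q$-th conditional moment of $\widehat{\rho}_{(j)}(\theta)-E_\Pi[\widehat{\rho}_{(j)}(\theta)]$ uniformly in $\theta$, using Gaussian moment bounds (Wick's formula) in the frozen regime, and pass to the maximum by a union bound over the $O(\tau_N^{-1})$ points of $\mathcal{G}^N$ (finite because $(\Gamma_N+L_j)\tau_N<\delta$). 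I expect the main obstacle to be precisely this uniform control: one must propagate the volatility-freezing error through a quadratic form, keep the three scales $\tau_N\ll L_j\tau_N\ll\tau_m$ balanced so that Assumption \ref{sampling}(ii) applies blockwise, and secure enough moments---the role of $Q>1$ and of Assumption \ref{sampling}(iii), which prevents intervals from collapsing---to beat the cardinality of the grid in the maximal inequality while simultaneously pinning the $L^{1/2}\tau_N$ resolution.
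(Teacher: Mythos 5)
Your proposal follows essentially the same route as the paper: condition on the sampling times, freeze the volatility on blocks $I_m(k)$ of length $T\tau_m$ (the paper's Lemma \ref{lemma:block}, proved via the Hayashi--Yoshida reduction and BDG inequalities), express the conditional mean through the spectral representation so that the blockwise sums organize into $D^N_k$, invoke Assumption \ref{sampling}(ii) and Lai's theorem to pass to $D(\lambda)$ and $2^j1_{\Lambda_{-j}}(\lambda)$, and finish with Riemann-sum convergence of $\tau_m\sum_k c^N_k(\theta)$ to $\Sigma_T(\theta_j)$ together with a union bound over $\mathcal{G}^N$ for part (a); this is exactly the skeleton of the paper's Lemma \ref{lemma:main} and its four-term decomposition. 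The one genuine difference is the concentration tool for the mean-zero Gaussian quadratic form $\widetilde{U}^N_k(\theta)-\overline{U}^N_k(\theta)$: the paper proves a per-block \emph{exponential}-moment bound (Lemma \ref{lemma:exp}, following Davies and Dalalyan--Yoshida, built on spectral- and Frobenius-norm estimates of the covariance-weighted overlap matrix obtained by integration by parts in the spectral representation), whose super-polynomial tail trivially beats the $O(\tau_N^{-1}LM_N)$ cardinality in the union bound; you propose instead fixed high-order Wick/moment bounds, which also suffice here because second-order chaos is hypercontractive and the grid is only polynomially large in $\tau_N^{-1}$ --- but note that either way the unavoidable technical core is the same Frobenius-norm computation, which your sketch names as ``the main obstacle'' without carrying it out. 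Two smaller imprecisions: in the paper the exponent $Q$ and the condition $\alpha>1$ of Assumption \ref{sampling}(ii) are used \emph{only} for the term replacing $D^N_k$ by $D$ (the block count is already normalized inside the assumption; it is $\alpha>1$ that absorbs the grid cardinality $\tau_N^{-1}$ there), so your attribution of $Q$ to the maximal inequality for the fluctuation term conflates two separate steps; and the final deterministic limits in (a) and (b) --- including the precise mechanism by which the threshold $v_N\gg L^{1/2}\tau_N$ and the cross-band leakage of $H_{j,L}$ enter --- are not reproved in this paper but deferred to Lemmas 6--7 of \cite{HK2016}, so your heuristic via $H_{j,L}(0)=0$ and the ``effective width'' of the window, while pointing in the right direction, would still need the quantitative oscillatory-integral estimate from that reference to pin down the $\sqrt{L}$ rate.
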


% by TH: 081717
The next theorem concerns consistency of the estimators $\widehat{\theta}_j$ and can be considered as a counterpart of Theorem 1 in \cite{HRY2013}:
%The next theorem concerns the consistency of the estimators $\widehat{\theta}_j$ and can be considered as a counterpart of Theorem 1 from \cite{HRY2013}:
\begin{theorem}\label{HRY}
Let $j$ be a positive integer. % by YK: 082317
Suppose that Assumptions \ref{volatility}--\ref{sampling} are satisfied. 
Suppose also that $L\to\infty$ and $(L\tau_N)^\gamma\log L\to0$ as $N\to\infty$ and that $(\Gamma_N+L_j)\tau_N<\delta$ for every $N$. 
If a sequence $v_N>0$ satisfies $\tau_N^{-1}v_N\to\infty$ as $N\to\infty$, then
\[
v_N^{-1}(\widehat{\theta}_j-\theta_j)\to^p0
\]
as $N\to\infty$, provided that $R_j\neq0$ and $\Sigma_T(\theta_j)\neq0$ a.s. In particular, we have $\widehat{\theta}_j\to^p\theta_j$ as $N\to\infty$
%\textcolor{red}{, for every $j$}	%by TH:081717
.
\end{theorem}
Theorem \ref{HRY} shows that the proposed estimators $\widehat{\theta}_j$ enjoy a similar asymptotic property to that of the lead-lag estimator by \cite{HRY2013}. 
Remarkably, our estimators achieve the same convergence rate as \citet{HRY2013}'s one, so we have no cost to separate multiple lead-lag relationships in our setting. 
\begin{rmk}\label{rmk:non-sy}
The convergence rate given in our result is better than the one in \cite[Theorem 2]{HK2016}, where the latter presents convergence rates of different estimators for $\theta_j$. This improvement is not due to our new construction of estimators but due to gain from a special property of Daubechies' wavelets (presented in Corollary \ref{coro:deriv}), which is ignored in \cite{HK2016}. 
%byTH: 081717
%Theorem \ref{HRY} shows that the proposed estimators $\widehat{\theta}_j$ enjoy a similar asymptotic property to that of the estimator proposed in our previous work \cite{HK2016}. As stated in the Introduction, the estimators have a computational advantage when applying it to high-frequency data with fine timestamps such as one milli- or micro-second. In the next section we see that they can possess another advantage in terms of finite sample performance, especially when observations are non-synchronous.
%Theorem \ref{HRY} shows that our new estimator $\widehat{\theta}_j$ enjoys a similar asymptotic property to that of the estimator proposed in our previous work \cite{HK2016}. As stated in the Introduction, our new estimator has a computational advantage in applications to high-frequency data with sub-second time stamps. In the next section we see that the new one has another advantage in terms of finite sample performance.  
\end{rmk}

\begin{rmk}
%As stated in the Introduction, compared with the estimators proposed in \cite{HK2016}, our novel estimators have a computational advantage when applying it to high-frequency data with fine timestamps such as one milli- or micro-second. In addition, 
Our estimators are expected to perform better than \cite{HK2016}'s one for non-synchronous data. To see this, let us consider the Lo-MacKinlay sampling scheme considered in Proposition \ref{prop:bernoulli} with $a=1$. Then, $\widehat{\rho}_{N-j+1}(\theta_j)$ converges in probability to $\rho_j^{[1]}:=2^j\Sigma_T(\theta_j)R_{j}\int_{\Lambda_{-j}}D(\lambda)d\lambda$ as $N\to\infty$, where $D(\lambda)$ is give by \eqref{eq:bernoulli}. Meanwhile, the counterpart of $\widehat{\rho}_{N-j+1}(\theta_j)$ in \cite{HK2016} converges in probability to $\rho_j^{[2]}:=2^j\Sigma_T(\theta_j)R_{j}\int_{\Lambda_{-j}}\tilde{D}(\lambda)d\lambda$ as $N\to\infty$, where
\[
\tilde{D}(\lambda)=\frac{1}{2\pi}\left|\frac{e^{-\sqrt{-1}\lambda}-1}{\lambda}\right|^2\Re\left[\frac{(1-\pi_1)(1-\pi_2)}{(1-\pi_1e^{-\sqrt{-1}\lambda})(1-\pi_2e^{-\sqrt{-1}\lambda})}\right].
\]
See Theorem 1 in \cite{HK2016} (note that $\widehat{\rho}_{N-j+1}(\theta_j)$ is essentially constructed from observations on $[0,T+|\theta_j|]$). 
A straightforward computation shows
\begin{multline*}
|\rho_j^{[1]}|-|\rho_j^{[2]}|=(1-\pi_1)(1-\pi_2)(\pi_1+\pi_2-2\pi_1\pi_2)\\
\times2^j\Sigma_T(\theta_j)|R_{j}|\int_{\Lambda_{-j}}\frac{\sin^2\lambda}{\pi\lambda^2|1-\pi_1e^{-\sqrt{-1}\lambda}|^2|1-\pi_2e^{-\sqrt{-1}\lambda}|^2}d\lambda,
\end{multline*}
which is always non-negative and strictly positive when $\Sigma_T(\theta_j)|R_{j}|>0$ and $\pi_1\wedge\pi_2>0$. 
Therefore, we may expect that our cross-covariance estimators could identify peaks of true correlograms better when observations are non-synchronous. 
In the next section we give numerical evidence to support this argument. 
\end{rmk}

\section{Simulation study}\label{section:simulation}

In this section we assess the finite sample accuracy of the proposed estimators $\widehat{\theta}_j$ by a Monte Carlo study. We set $N=14$, $T=n\tau_N$ with $n=30,000$. 
%In this section we assess the finite sample accuracy of our novel estimators $\widehat{\theta}_j$ by a Monte Carlo study. We set $N=14$, $T=n\tau_N$ with $n=30,000$. 

%\subsection{L\'evy-Ciesielski construction based model}

%First 
We simulate model \eqref{model} with the following two scenarios of the volatility processes:
\begin{description}

\item[Scenario 1] Constant volatilities. $\sigma^\nu\equiv1$ for $\nu=1,2$.

\item[Scenario 2] Stochastic volatilities with a leverage effect. The Heston model is adopted to generate the volatility process $\sigma^\nu_t$ for each $\nu=1,2$: The process $v^\nu_t=(\sigma^\nu_t)^2$ is the solution of the following stochastic differential equation:
\[
dv^\nu_t=\kappa(\eta-v^\nu_t)dt+\xi\sqrt{v^\nu_t}(\rho dB^\nu_t+\sqrt{1-\rho^2}dW^\nu_t),
\]
where $W^\nu$ is a standard Wiener process and the initial value $v_0^\nu$ is randomly drawn from the stationary distribution of the process $v^\nu_t$ in each iteration, i.e.~$v_0^\nu\sim\mathrm{Gamma}(2\kappa\eta/\xi^2,2\kappa/\xi^2)$. We assume that the processes $B$, $W^1$ and $W^2$ are mutually independent. The parameters $\kappa$, $\eta$, $\xi$ and $\rho$ are chosen as in \cite{CPTV2017}: $\kappa=5$, $\eta=0.04$, $\xi=0.5$ and $\rho=-0.5$. 

\end{description}
The parameters for the spectral density \eqref{asymptotic} are chosen as in Table \ref{parameters}. Simulation of the paths of the process $B$ is performed in the same way as in \cite{HK2016}. 
Namely, we simulate the bivariate stationary Gaussian sequence $\Delta_kB:=B_{(k+1)\tau_N}-B_{k\tau_N}$ ($k=0,1,\dots,n-1$) by the multivariate circulant embedding method of \cite{CW1999} with the following approximation\footnote{Note that there are typos in the first and second displayed equations in page 1221 of \cite{HK2016}: The $j$-th summands on their right hand sides should have been multiplied by $2^j$.}:
\[
E[\Delta_kB^1\Delta_{k+l}B^2]\approx\tau_N^2\sum_{j=1}^{J+1}2^{J-j+1}R_j\psi^{LP}(2^{J-j+1}(l\tau_N-\theta_j)).
\]
This simulation scheme has been implemented in the R package \pck{yuima} as the function \texttt{simBmllag} since version 1.10.2. The R package \pck{yuima} also contains the function \texttt{wllag} to implement our scale-by-scale lead-lag time estimators $\widehat{\theta}_j$. 

\begin{table}[htp]
\caption{Parameters for the spectral density \eqref{asymptotic}}
\label{parameters}
\begin{center}
\begin{tabular}{l|*{10}{c}}\hline
$j$ & 1 & 2 & 3 & 4 & 5 & 6 & 7 & 8 & 9--15 \\ \hdashline
$R_j$ & 0.3 & 0.5 & 0.7 & 0.5 & 0.5 & 0.5 & 0.5 & 0.5 & 0 \\
$\theta_j/\tau_N$  & $-1$ & $-1$ & $-2$ & $-2$ & $-3$ & $-5$ & $-7$ & $-10$ & 0 \\ \hline
\end{tabular}
\end{center}
\end{table}%

We use the Lo-MacKinlay sampling scheme with $a=1$ presented in Section \ref{section:asymptotic} to generate the sampling times $(t^1_i)_{i=0}^{n_1}$ and $(t^2_i)_{i=0}^{n_2}$. We fix $\pi_1$ as $\pi_1=1/4$ and vary $\pi_2$ as $\pi_2\in\{1/4,1/2,3/4\}$. 
% by TH: 081717
Recall that $\pi_\nu$ is the probability of occurrence of observation missing for the $\nu$-th asset $X^\nu$. The larger value $\pi_2$ takes the less frequently $X^\nu$ is observed, making the degree of non-synchronicity higher.
%Recall that $\pi_\nu$ is the occurrence probability of missing observations for the $\nu$-th asset $X^\nu$. Therefore, $X^\nu$ is observed less frequently as the value of $\pi_2$ increases, hence the degree of the non-synchronicity becomes higher.
We use $L=20$ as the length of Daubechies' wavelet filter and set $\mathcal{G}^N=\{l\tau_N: l\in\mathbb{Z}, |l|\leq100\}.$ For comparison we also compute the estimators for $\theta_j$ proposed in \cite{HK2016}, each of which is defined as a maximizer of the corresponding so-called {\it wavelet cross-covariance estimator} based on data synchronized by interpolation (we refer to it as ``WCCF''). Here, the computation of the wavelet cross-covariance estimators requires the specification of Daubechies' wavelet and we use the least asymmetric wavelet with length 20. 
%We use $L=20$ as the length of Daubechies' wavelet filter and set $\mathcal{G}^N=\{l\tau_N: l\in\mathbb{Z}, |l|\leq100\}.$ For comparison we also compute the estimator for $\theta_j$ proposed in \cite{HK2016}, which is defined as a maximizer of the so-called wavelet cross-covariance estimators based on data synchronized by interpolation (we refer to it as WCCF). Here, computation of the wavelet cross-covariance estimators requires specification of Daubechies' wavelet and we use the least asymmetric wavelet with length 20.  
We run 1,000 Monte Carlo iterations with each of three experimental conditions in each scenario. Table \ref{table:results} reports the sample median and the median absolute deviation (MAD) of the estimates for each experiment in Scenario 1. We see from Table \ref{table:results} that both estimators accurately estimate the true values in the case of $\pi_2=1/4$ at the levels $j\leq 7$. It is theoretically natural that the accuracy of the estimators declines as $j$ increases because the contrast function $|\widehat{\rho}_{\tcr{N-j+1}}(\theta)|$, $\theta\in\mathcal{G}_N$ gets flatter as $L_j=(2^j-1)(L-1)+1$ increases. In the cases of $\pi_2=1/2$ and $\pi_2=3/4$, the WCCF estimators are apparently biased at the levels $j\geq3$, while the estimators $\widehat{\theta}_j$ still keep the good precision. Hence our new estimators can handle high-frequency data with rather a high degree of non-synchronicity, which is theoretically expected from the argument in Remark \ref{rmk:non-sy}.
%We run 1,000 Monte Carlo iterations for each experiment. Table \ref{table:results} reports the sample median and the median absolute deviation (MAD) of the estimates for each experiment in Scenario 1. We see from Table \ref{table:results} that both the estimators exhibit good accuracy in the case of $\pi_2=1/4$ for the levels $j\leq 7$. It is theoretically natural that the accuracy of the estimators declines as $j$ increases because the contrast function $|\widehat{\rho}_{\tcr{N-j+1}}(\theta)|$, $\theta\in\mathcal{G}_N$ gets flatter as $L_j=(2^j-1)(L-1)+1$ increases. In the cases of $\pi_2=1/2$ and $\pi_2=3/4$, the WCCF estimators are strongly biased for the levels $j\geq3$, while the estimators $\widehat{\theta}_j$ still keep the good precision. Hence our new estimators can handle high-frequency data with rather high degree of the non-synchronicity. 
Table \ref{table:heston} shows simulation results in Scenario 2. As the table reveals, the presence of a time variation and a leverage effect in the volatilities does not affect the performance of the proposed estimators, which is aligned with the obtained asymptotic theory.
%Table \ref{table:heston} shows the simulation results in Scenario 2. As the table reveals, the presence of a time variation and a leverage effect in the volatilities does not affect the performance of our estimators, which is in line with our asymptotic theory.
%As the table reveals, at the finest resolution levels $j=1,2$ all the estimates are very precise, while at coarser resolution levels $j\geq3$ the LA(8) and LA(20) based estimators tend to perform better than the Haar based estimators. This is not surprising because the consistency of our estimator is ensured in the asymptotics as $L$ tends to infinity. The LA(20) based estimator shows an excellent performance at moderate resolution levels $j\leq 6$ even in the presence of missing observations. At the coarsest resolution levels $j=7,8$, the precisions of all the estimators fall. This would be due to the following reason: According to the proofs of Theorems \ref{theorem:main}--\ref{HRY}, the convergence rate of our estimator $\widehat{\theta}_j$ is proportional to the square root of $L_j=(2^j-1)(L-1)+1$, hence it is declined as $j$ increases. 

\if0
\begin{table}[ht]
\caption{Simulation results in Scenario 1}
\label{table:results}
\begin{center}
\begin{tabular}{l|*{8}{C}}
  \hline
$j$ & 1 & 2 & 3 & 4 & 5 & 6 & 7 & 8 \\ 
  \hline
True & -1 & -1 & -2 & -2 & -3 & -5 & -7 & -10 \\ 
&  \multicolumn{8}{c}{$\pi_2=1/4$}  \\ 
  $\widehat{\theta}_j$ & -1 & -1 & -2 & -2 & -3 & -5 & -7 & -9 \\ 
   & (0) & (0) & (0) & (0) & (0) & (1) & (3) & (9) \\ 
  WCCF & -1 & -1 & -2 & -2 & -3 & -5 & -7 & -9 \\ 
   & (0) & (0) & (0) & (0) & (0) & (1) & (3) & (9) \\ 
   &  \multicolumn{8}{c}{$\pi_2=1/2$}  \\ 
  $\widehat{\theta}_j$ & -1 & -1 & -2 & -2 & -3 & -5 & -7 & -9 \\ 
   & (0) & (0) & (0) & (0) & (0) & (1) & (3) & (9) \\ 
  WCCF & -1 & -1 & -1 & -1 & -2 & -4 & -6 & -8 \\ 
   & (0) & (0) & (0) & (0) & (0) & (1) & (3) & (9) \\ 
   &  \multicolumn{8}{c}{$\pi_2=3/4$}  \\ 
  $\widehat{\theta}_j$ & -1 & -1 & -2 & -2 & -3 & -5 & -7 & -9 \\ 
   & (0) & (0) & (0) & (0) & (0) & (1) & (3) & (9) \\ 
  WCCF & -1 & -1 & -1 & 0 & 0 & -2 & -4 & -7 \\ 
   & (0) & (0) & (0) & (0) & (0) & (1) & (3) & (9) \\ 
   \hline
\end{tabular}\\ \vspace{2mm}

\parbox{9cm}{\small This table reports the median and the median absolute deviation (in parentheses) of the estimates in Scenario 1 (divided by $\tau_N$).} 
\end{center}
\end{table}
\fi

\begin{table}[ht]
%\small
\caption{Simulation results in Scenario 1}
\label{table:results}
\begin{center}
\begin{tabular}{l|*{8}{C}}
  \hline
$j$ & 1 & 2 & 3 & 4 & 5 & 6 & 7 & 8 \\ 
  \hline
True & -1 & -1 & -2 & -2 & -3 & -5 & -7 & -10 \\ 
&  \multicolumn{8}{c}{$\pi_2=1/4$}  \\ 
  $\widehat{\theta}_j$ & -1~(0) & -1~(0) & -2~(0) & -2~(0) & -3~(0) & -5~(1) & -7~(3) & -9~(9) \\ 
  WCCF & -1~(0) & -1~(0) & -2~(0) & -2~(0) & -3~(0) & -5~(1) & -7~(3) & -9~(9) \\ 
  &  \multicolumn{8}{c}{$\pi_2=1/2$}  \\ 
  $\widehat{\theta}_j$ & -1~(0) & -1~(0) & -2~(0) & -2~(0) & -3~(0) & -5~(1) & -7~(3) & -9~(9) \\ 
  WCCF & -1~(0) & -1~(0) & -1~(0) & -1~(0) & -2~(0) & -4~(1) & -6~(3) & -8~(9) \\ 
  &  \multicolumn{8}{c}{$\pi_2=3/4$}  \\ 
  $\widehat{\theta}_j$ & -1~(0) & -1~(0) & -2~(0) & -2~(0) & -3~(0) & -5~(1) & -7~(3) & -9~(9) \\ 
  WCCF & -1~(0) & -1~(0) & -1~(0) & 0~(0) & 0~(0) & -2~(1) & -4~(3) & -7~(9) \\ 
   \hline
\end{tabular}\\ \vspace{2mm}

\parbox{13cm}{\small This table reports the median and the median absolute deviation (in parentheses) of the estimates in Scenario 1 (divided by $\tau_N$).} 
\end{center}
\end{table}

\if0
\begin{table}[ht]
\caption{Simulation results in Scenario 2}
\label{table:heston}
\begin{center}
\begin{tabular}{l|*{8}{C}}
  \hline
$j$ & 1 & 2 & 3 & 4 & 5 & 6 & 7 & 8 \\ 
  \hline
True & -1 & -1 & -2 & -2 & -3 & -5 & -7 & -10 \\ 
&  \multicolumn{8}{c}{$\pi_2=1/4$}  \\ 
  $\widehat{\theta}_j$ & -1 & -1 & -2 & -2 & -3 & -5 & -7 & -9 \\ 
   & (0) & (0) & (0) & (0) & (0) & (1) & (3) & (9) \\ 
  WCCF & -1 & -1 & -2 & -2 & -3 & -5 & -7 & -9 \\ 
   & (0) & (0) & (0) & (0) & (0) & (1) & (3) & (9) \\ 
   &  \multicolumn{8}{c}{$\pi_2=1/2$}  \\ 
  $\widehat{\theta}_j$ & -1 & -1 & -2 & -2 & -3 & -5 & -7 & -9 \\ 
   & (0) & (0) & (0) & (0) & (0) & (1) & (3) & (9) \\ 
  WCCF & -1 & -1 & -1 & -1 & -2 & -4 & -6 & -8 \\ 
   & (0) & (0) & (0) & (0) & (0) & (1) & (3) & (9) \\ 
   &  \multicolumn{8}{c}{$\pi_2=3/4$}  \\ 
  $\widehat{\theta}_j$ & -1 & -1 & -2 & -2 & -3 & -5 & -7 & -9 \\ 
   & (0) & (0) & (0) & (0) & (0) & (1) & (3) & (9) \\ 
  WCCF & -1 & -1 & -1 & 0 & 0 & -2 & -4 & -6 \\ 
   & (0) & (0) & (0) & (0) & (0) & (1) & (3) & (9) \\ 
   \hline
\end{tabular}\\ \vspace{2mm}

\parbox{9cm}{\small This table reports the median and the median absolute deviation (in parentheses) of the estimates in Scenario 2 (divided by $\tau_N$).} 
\end{center}
\end{table}
\fi

%\clearpage

\begin{table}[ht]
%\small
\caption{Simulation results in Scenario 2}
\label{table:heston}
\begin{center}
\begin{tabular}{l|*{8}{C}}
  \hline
$j$ & 1 & 2 & 3 & 4 & 5 & 6 & 7 & 8 \\ 
  \hline
True & -1 & -1 & -2 & -2 & -3 & -5 & -7 & -10 \\ 
&  \multicolumn{8}{c}{$\pi_2=1/4$}  \\ 
  $\widehat{\theta}_j$ & -1~(0) & -1~(0) & -2~(0) & -2~(0) & -3~(0) & -5~(1) & -7~(3) & -9~(9) \\ 
  WCCF & -1~(0) & -1~(0) & -2~(0) & -2~(0) & -3~(0) & -5~(1) & -7~(3) & -9~(9) \\ 
  &  \multicolumn{8}{c}{$\pi_2=1/2$}  \\ 
  $\widehat{\theta}_j$ & -1~(0) & -1~(0) & -2~(0) & -2~(0) & -3~(0) & -5~(1) & -7~(3) & -9~(9) \\ 
  WCCF & -1~(0) & -1~(0) & -1~(0) & -1~(0) & -2~(0) & -4~(1) & -6~(3) & -8~(9) \\ 
  &  \multicolumn{8}{c}{$\pi_2=3/4$}  \\ 
  $\widehat{\theta}_j$ & -1~(0) & -1~(0) & -2~(0) & -2~(0) & -3~(0) & -5~(1) & -7~(3) & -9~(9) \\ 
  WCCF & -1~(0) & -1~(0) & -1~(0) & 0~(0) & 0~(0) & -2~(1) & -4~(3) & -6~(9) \\ 
   \hline
\end{tabular}\\ \vspace{2mm}

\parbox{13cm}{\small This table reports the median and the median absolute deviation (in parentheses) of the estimates in Scenario 2 (divided by $\tau_N$).} 
\end{center}
\end{table}

\clearpage

\section{Empirical application}\label{section:empirical}
% by TH: 081717
In this section we apply the proposed method to actual high-frequency data in the U.S. stock market. We investigate the lead-lag relationships between quote updates of a single asset traded concurrently at multiple exchanges.\footnote{This is closely related to the issue of identifying the particular exchange at which {\it price discovery} of the assets actually takes place. Such an issue is one of the fundamental problems in financial econometrics and has been widely studied in the literature; see e.g.~\cite{Has1995,Subrahmanyam1997,HMW2002,OWD2017,Has2019}.} 
%We select four assets; Apple (AAPL), Cisco Systems (CSCO), Intel (INTC) and Microsoft (MSFT). They are among stocks that are listed on the NASDAQ exchange and are at the same time constituents of the 30 Dow Jones Industrial Average (DJIA) stocks as of August 2015. 
The exchanges chosen for the current analysis are the NASDAQ and BATS. 
We focus on the component stocks of NASDAQ-100 in 2015, containing totally 108 assets. 
The source is the Daily TAQ database, whose time precision is one micro-second. 
The sample period is the whole month of August 2015, consisting of 21 trading days. 
We use quote data recorded between 9:45 and 15:45. Namely, we discard the first and last 15 minutes from the market opening time to avoid abnormal behavior frequently observed at the start and end of trading sessions. 
To construct log-price processes from quote data, we compute logarithms of micro-prices (cf.~Eq.(2.2) in \cite{GO2010}). 
%We report in Table \ref{table:number} the average daily numbers of trades of each asset on each exchange. 
%In this section we apply our new method to a set of market data consisting of high-frequency transactions of 4 assets. The 4 assets we will focus on are Apple (AAPL), Cisco Systems (CSCO), Intel (INTC) and Microsoft (MSFT). They are chosen from the stocks which are listed on the NASDAQ exchange and components of the 30 Dow Jones Industrial Average (DJIA) stocks in August 2016. We use intraday transaction data recorded between 9:30 am and 16:00, which are taken from the Daily TAQ database with the accuracy of the timestamp values being one micro-second. The sample period is the whole of August 2016, containing 21 trading days. We investigate the lead-lag relationships between transactions of a single asset executed on different exchanges.\footnote{This subject addresses the issue of determining in which exchange price discovery of the asset occurs. Such an issue is one of the fundamental problems in financial econometrics and has been widely studied in the literature; see e.g.~\cite{Has1995,Subrahmanyam1997,HMW2002,OWD2017}.} The exchanges we will focus on are the NASDAQ, NYSE Arca and BATS exchanges. We report in Table \ref{table:number} the average daily numbers of transactions of each asset on each exchange. 

As argued in \cite{BM2019,Has2019}, the Daily TAQ database provides two kinds of timestamps for each record. The one denotes the time when a quote update is recorded by an exchange matching-engine, while the other denotes the time when it is processed by a securities information processor (SIP). Henceforth we refer to the former as the participant timestamp and to the latter as the SIP timestamp, following \cite{BM2019}. 
In the present analysis we focus mainly on the SIP timestamp because it is assigned by the single SIP managed by the NASDAQ (``NASDAQ SIP'') and thus cause no clock synchronization issue. Nevertheless, we will later consider a geographical effect on lead-lag times. For this purpose we will need to take account of reporting latencies between the NASDAQ SIP and each exchange, requiring us to handle the participant timestamp. Since each exchange is obligated to synchronize their clocks to UTC to within $100\mu\mathrm{s}$, we set our finest time resolution as $\tau_N=100\mu\mathrm{s}$.\footnote{It will also be reasonable to suppose that market participants react at time scales in $100\mu\mathrm{s}$ or slower because it takes at least around $100\mu\mathrm{s}$ to transmit information among the exchanges considered here; see Table 2 in \cite{Tiv2020}.} Then we take
\[
\mathcal{G}^N=\{-10.0\mathrm{ms},-9.9\mathrm{ms},\dots,-0.1\mathrm{ms},0.0\mathrm{ms},0.1\mathrm{ms},\dots,9.9\mathrm{ms},10.0\mathrm{ms}\}
\]
as the search grid. 
We use $L=20$ as the length of Daubechies' wavelet filter. 

%Also, to eliminate anomalously large price changes (i.e.~jumps), we remove returns exceeding a pre-described threshold when we compute $\widehat{U}^N(\theta)$ (cf.~\cite{MG2012,Koike2014thy}).\footnote{Jumps can be quite detrimental for lead-lag estimation. For instance, if $X^1$ and $X^2$ are independent Poisson processes, $\widehat{U}^N(\theta)$ is typically maximized at a non-zero $\theta$.} 
%To be precise, given return series $\Delta_iX^\nu:=X^\nu_{t^\nu_i}-X^\nu_{t^\nu_{i-1}}$, $i=1,\dots,n_\nu$, we replace $\Delta_iX^\nu$ by $\Delta_iX^\nu1_{\{|\Delta_iX^\nu|\leq\widehat{\sigma}^\nu\sqrt{(t_i^\nu-t_{i-1}^\nu)\cdot2\log n_\nu}\}}$ for each $i$, where $\widehat{\sigma}^\nu$ is the median of $(|\Delta_iX^\nu|)_{i=1}^{n_\nu}$ divided by 0.6745 ($\fallingdotseq$ the third quartile of the standard normal distribution). This choice of threshold is known as the universal threshold (see e.g.~\cite{FW2007}). 

For comparison we also compute the following two estimators for lead-lag times. % by TH: 081717
%relationships. 
\begin{itemize}

\item Hoffmann-Rosenbaum-Yoshida (HRY) estimator \cite{HRY2013}: This estimator is defined as a maximizer of $\widehat{U}^N(\theta)$ over the grid $\mathcal{G}^N$:
\[
\widehat{\theta}^{HRY}=\arg\max_{\theta\in\mathcal{G}^N}|\widehat{U}^N(\theta)|.
\]

\item Dobrev-Schaumburg (DS) estimator \cite{DS2016}: This estimator is constructed as follows. For each $\nu=1,2$ and each $t\geq0$, we set $I^\nu_t=1$ if $t\in\{t^\nu_i:i=0,1,\dots,n_\nu\}$ and $I^\nu_t=0$ otherwise. Then we define 
\[
\mathcal{X}(\theta)=\frac{1}{\min\{n_1,n_2\}}\sum_{k=1}^\infty I^1_{k\tau_N}I^2_{k\tau_N+\theta}
\]
for each $\theta\in\mathbb{R}$. Now, the DS estimator $\widehat{\theta}^{DS}$ is defined as a maximizer of $\mathcal{X}(\theta)$ over the grid $\mathcal{G}^N$:
\[
\widehat{\theta}^{DS}=\arg\max_{\theta\in\mathcal{G}^N}\mathcal{X}(\theta).
\]

\end{itemize}

\subsection{Results}

Figure \ref{third-hist} shows the histograms of the lead-lag time estimates of $\widehat{\theta}^{HRY}$, $\widehat{\theta}_{j}$ ($1\leq j\leq7$) and $\widehat{\theta}^{DS}$ for the 108 assets, evaluated every trading day. 
Here, the horizontal axis is expressed in mili-seconds and the positive values imply that the NASDAQ leads the BATS and vice versa. 
We see that the estimates of $\widehat{\theta}_{j}$ at the levels $j=1,2,3$ have sharp peaks at small positive values, while those at the levels $j=4,5,6,7$ have two peaks located at positive and negative values, respectively. 
The estimates of $\widehat{\theta}^{HRY}$ have a peak around 0 but are negatively skewed. 
The estimates of $\widehat{\theta}^{DS}$ have a very sharp peak at a small positive value, suggesting the presence of consistent leadership of the NASDAQ against the BATS in trading activity. 

These observations suggest that the estimates of $\widehat{\theta}_{j}$ at the finer levels $j=1,2,3$ (corresponding to the time scales between 0.1ms and 0.8ms) might be related to those of $\widehat{\theta}^{DS}$, while the negative estimates of $\widehat{\theta}_{j}$ at the coarser levels $j=4,5,6,7$ (corresponding to the time scales between 0.8ms and 12.8ms) might have some links with those of $\widehat{\theta}^{HRY}$. To confirm the first claim, we compute summary statistics for the estimates of $\widehat{\theta}_{j}$ for $j=1,2,3$ and $\widehat{\theta}^{DS}$ in the left panel of Table \ref{table:fine}. As the table reveals, the estimates of $\widehat{\theta}^{DS}$ are concentrated at $\theta=0.3\mathrm{ms}$ and the other estimates are mostly distributed around this lead-lag time. 
Now, following \cite{DS2016}, we argue that the value $0.3\mathrm{ms}$ comes from a geographical reason. First, we note that our analysis is based on timestamps placed by the NASDAQ SIP and they contain reporting latencies when quote updates are transmitted from each exchange: See \cite{BM2019,Has2019} for more details. 
%Table \ref{table:latency} shows summary statistics for the daily median reporting latencies of each asset from the NASDAQ and BATS exchanges in our samples, respectively. We infer from this table that SIP quote updates for the BATS would primitively lag those for the NASDAQ with lead-lag times between 0.1ms and 0.2ms. 
In particular, Table A.1 in \cite{BM2019} suggests that SIP quote updates for the BATS would primitively lag those for the NASDAQ with lead-lag times around 0.2ms due to the difference between their reporting latencies. 
To check this, we re-evaluate $\widehat{\theta}_{j}$ for $j=1,2,3$ and $\widehat{\theta}^{DS}$ using the participant timestamp instead of the SIP timestamp. The results are reported in the right panel of Table \ref{table:fine}. 
The table shows that the estimates of $\widehat{\theta}^{DS}$ are concentrated at $\theta=0.1\mathrm{ms}$, supporting the above discussion. We speculate that the value $0.1\mathrm{ms}$ is originated from the transit time between the matching engines of the NASDAQ and BATS exchanges: The former's are located in Carteret, NJ, while the latter's are located at the Equinix data center in Secaucus, NJ. The fastest transit time between Carteret and Secaucus is estimated as around 0.1ms; see Table 2 in \cite{Tiv2020}. 

Now we turn to the second claim. Panel A of Table \ref{table:coarse} shows summary statistics for the negative and positive estimates of $\widehat{\theta}_{j}$ for $j=4,5,6,7$ as well as the whole estimates of $\widehat{\theta}^{HRY}$. Here, the row ``Spearman'' in the table indicates Spearman's rank correlation coefficients with the estimates of $\widehat{\theta}^{HRY}$. These values of Spearman's rank correlation coefficients suggest that the negative estimates of $\widehat{\theta}_{j}$ for $j=5,6,7$ would have some relationships with those of $\widehat{\theta}^{HRY}$. 
%This becomes more pronounced if we focus on assets whose quotes are actively updated as suggested by Panel B of Table \ref{table:coarse}, where we re-evaluate the above summary statistics based on the estimates for the top half most actively updated stocks (since we have more sample size to estimate daily lead-lag times as the number of quote updates increases, we may expect that the lead-lag time estimates for actively updated stocks would be more precise). 
This becomes more pronounced if we focus on estimates based on larger sample sizes: Panel B of Table \ref{table:coarse} shows the above summary statistics computed for the estimates based on samples with more than 50,000 quote updates. 

In summary, we infer from the above analysis the following lead-lag relationships between the NASDAQ and BATS exchanges for the NASDAQ-100 assets: 
On one hand, the lead-lag time estimates of $\widehat{\theta}^{DS}$ capture cross-market trading activity by the fastest traders and their values come from the geographical distance between the data centers of the two exchanges. At this finest time scales, the NASDAQ exchange typically leads the BATS exchange. 
On the other hand, the lead-lag time estimates of $\widehat{\theta}^{HRY}$ basically capture trading activity by relatively slower traders acting at time scales in mili-seconds. At this relatively coarser time scales, the BATS primally leads the NASDAQ, although there are non-negligible cases that the BATS lags the NASDAQ. 
Our new estimators successfully separate these two types of lead-lag relationships at different time scales up to some extent. 
At the best of our knowledge, this empirical observation is new in the literature, which has been drawn for the first time by our proposed estimation methodology for multiple lead-lag times.

\begin{figure}[h]
\centering
\caption{Histograms of the daily lead-lag time estimates for the NASDAQ-100 assets}
\label{third-hist}
\includegraphics[scale=1]{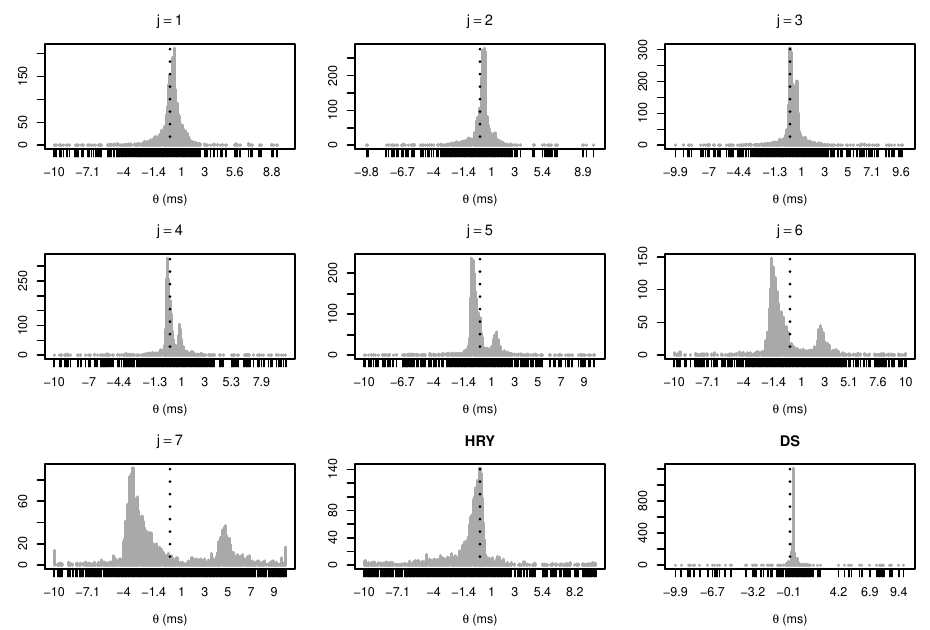}
\medskip

\parbox{15cm}{\small
This figure shows the histograms of the daily lead-lag time estimates $\widehat{\theta}_{j}$ ($1\leq j\leq8$), $\widehat{\theta}^{HRY}$ and $\widehat{\theta}^{DS}$ of quote updates between the NASDAQ and BATS exchanges, computed for all the component stocks of the NASDAQ-100 in August, 2015. 
The price processes are constructed by computing the micro prices. 
The horizontal axis is expressed in mili-seconds. The dash line denotes 0ms. A positive lead-lag time estimate implies that the NASDAQ leads the BATS and vice versa.
}
\end{figure}

\begin{table}[h]
\caption{Summary statistics for the estimates of $\widehat{\theta}_{j}$ for $j=1,2,3$ and $\widehat{\theta}^{DS}$}
\label{table:fine}
\centering
\begin{tabular}{l|rrrr|rrrrr}
  \hline
  & \multicolumn{4}{c|}{SIP timestamp}  & \multicolumn{4}{c}{Participant timestamp} \\
% & $j=1$ & $j=2$ & $j=3$ & DS &   $j=1$ & $j=2$ & $j=3$ & DS \\ \hline
%  1st Quartile & -0.10 & 0.00 & 0.00 & 0.20   & -0.10 & -0.20 & -0.20 & -0.10 \\ 
%  Median & 0.20 & 0.30 & 0.20 & 0.30 &   0.10 & 0.10 & 0.00 & 0.10 \\ 
%  3rd Quartile & 0.60 & 0.50 & 0.60 & 0.30   & 0.20 & 0.30 & 0.40 & 0.10 \\ 
%  Mode & 0.40 & 0.40 & 0.00 & 0.30 &   0.10 & 0.30 & -0.10 & 0.10 \\ 
%   \hline
   & $j=1$ & $j=2$ & $j=3$ & DS &   $j=1$ & $j=2$ & $j=3$ & DS \\ \hline
  1st Quartile & -0.20 & 0.00 & 0.00 & 0.20   & -0.10 & -0.20 & -0.20 & -0.10 \\ 
  Median & 0.20 & 0.30 & 0.20 & 0.30 &   0.10 & 0.10 & 0.00 & 0.10 \\ 
  3rd Quartile & 0.60 & 0.50 & 0.60 & 0.30   & 0.20 & 0.30 & 0.40 & 0.10 \\ 
  Mode & 0.40 & 0.40 & 0.00 & 0.30 &   0.10 & 0.30 & -0.10 & 0.10 \\ 
   \hline
\end{tabular}\vspace{2mm}

\parbox{13cm}{\small This table reports the quartiles and modes of the daily lead-lag time estimates $\widehat{\theta}_{j}$ ($j=1,2,3$) and $\widehat{\theta}^{DS}$ (DS) of quote updates between the NASDAQ and BATS exchanges, computed for all the component stocks of the NASDAQ-100 in August, 2015. The values are expressed in mili-seconds. A positive lead-lag time estimate implies that the NASDAQ leads the BATS and vice versa. The left panel is for the estimates based on the SIP timestamp. The right panel is for the estimates based on the participant timestamp.} 
\end{table}

%\begin{table}[ht]
%\caption{Summary statistics for daily median reporting latencies}
%\label{table:latency}
%\centering
%\begin{tabular}{lrrrrrr}
%  \hline
% & Min. & 1st Qu. & Median & Mean & 3rd Qu. & Max. \\ 
%  \hline
%NASDAQ & 0.199 & 0.290 & 0.314 & 0.333 & 0.352 & 0.974 \\ 
%  BATS & 0.354 & 0.457 & 0.474 & 0.479 & 0.504 & 0.669 \\ 
%   \hline
%\end{tabular}\vspace{2mm}
%
%\parbox{11cm}{\small This table reports the five numbers and means of the daily median latencies of quote reports to the NASDAQ SIP from the NASDAQ and BATS exchanges, respectively, which are computed for all the component stocks of the NASDAQ-100 in August, 2015. The values are expressed in mili-seconds.} 
%\end{table}

\begin{table}[ht]
\caption{Summary statistics for the estimates of $\widehat{\theta}_{j}$ for $j=4,5,6,7$ and $\widehat{\theta}^{HRY}$}
\label{table:coarse}
\centering
\begin{tabular}{lrrrrrrrrrrr}
  \hline
  \multicolumn{12}{l}{Panel A: All the estimates} \\ \hline
& \multicolumn{4}{c}{Negative values} & & \multicolumn{4}{c}{Positive values} \\
% & $j=4$ & $j=5$ & $j=6$ & $j=7$ &  & $j=4$ & $j=5$ & $j=6$ & $j=7$ &  & HRY \\ 
%  1st Quartile & -0.40 & -0.70 & -1.60 & -3.50 &  & 0.30 & 0.60 & 1.80 & 4.00 &  & -1.30 \\ 
%  Median & -0.30 & -0.60 & -1.30 & -3.00 &  & 0.80 & 1.30 & 2.60 & 4.70 &  & -0.40 \\ 
%  3rd Quartile & -0.20 & -0.40 & -0.90 & -2.20 &  & 1.00 & 1.70 & 3.20 & 5.70 &  & 0.10 \\ 
%  Mode & -0.20 & -0.70 & -1.60 & -3.20 &  & 0.10 & 1.40 & 2.60 & 4.70 &  & 0.00 \\ 
%  Spearman & 0.22 & 0.44 & 0.51 & 0.52 &  & -0.13 & -0.09 & -0.12 & -0.01 &  & 1.00 \\ 
%  \# of estimates & 1178 & 1534 & 1621 & 1564 &  & 905 & 642 & 631 & 695 &  & 2268 \\ 
%   \hline 
    & $j=4$ & $j=5$ & $j=6$ & $j=7$ &  & $j=4$ & $j=5$ & $j=6$ & $j=7$ &  & HRY \\ 
  1st Quartile & -0.40 & -0.70 & -1.60 & -3.50 &  & 0.30 & 0.60 & 1.80 & 4.10 &  & -1.30 \\ 
  Median & -0.30 & -0.60 & -1.30 & -3.00 &  & 0.80 & 1.30 & 2.60 & 4.70 &  & -0.40 \\ 
  3rd Quartile & -0.20 & -0.40 & -0.90 & -2.20 &  & 1.00 & 1.70 & 3.20 & 5.60 &  & 0.10 \\ 
  Mode & -0.20 & -0.70 & -1.60 & -3.20 &  & 0.10 & 1.40 & 2.60 & 4.80 &  & 0.00 \\ 
  Spearman & 0.22 & 0.45 & 0.52 & 0.51 &  & -0.13 & -0.10 & -0.09 & -0.00 &  & 1.00 \\ 
  \# of samples & 1188 & 1543 & 1634 & 1560 &  & 897 & 634 & 618 & 700 &  & 2268 \\ 
   \hline
\hline
 \multicolumn{12}{l}{Panel B: Estimates based on more than 50,000 quote updates in the NASDAQ} \\ \hline
 & \multicolumn{4}{c}{Negative values} & & \multicolumn{4}{c}{Positive values} \\
& $j=4$ & $j=5$ & $j=6$ & $j=7$ &  & $j=4$ & $j=5$ & $j=6$ & $j=7$ &  & HRY \\ 
%  1st Quartile & -0.30 & -0.70 & -1.60 & -3.40 &  & 0.10 & 0.20 & 2.40 & 4.50 &  & -0.90 \\ 
%  Median & -0.20 & -0.50 & -1.30 & -2.90 &  & 0.30 & 1.20 & 2.60 & 4.80 &  & -0.40 \\ 
%  3rd Quartile & -0.10 & -0.30 & -0.80 & -2.10 &  & 0.80 & 1.50 & 2.90 & 5.30 &  & 0.00 \\ 
%  Mode & -0.20 & -0.70 & -1.60 & -3.20 &  & 0.10 & 0.10 & 2.50 & 4.80 &  & 0.00 \\ 
%  Spearman & 0.50 & 0.72 & 0.76 & 0.80 &  & -0.39 & -0.49 & -0.18 & 0.13 &  & 1.00 \\ 
%  \# of samples & 627 & 863 & 938 & 913 &  & 361 & 211 & 183 & 219 &  & 1134 \\ 
%  1st Quartile & -0.30 & -0.70 & -1.60 & -3.40 &  & 0.10 & 0.25 & 2.40 & 4.40 &  & -1.00 \\ 
%  Median & -0.20 & -0.50 & -1.30 & -2.90 &  & 0.40 & 1.30 & 2.70 & 4.80 &  & -0.40 \\ 
%  3rd Quartile & -0.10 & -0.30 & -0.80 & -2.10 &  & 0.90 & 1.55 & 3.00 & 5.70 &  & 0.00 \\ 
%  Mode & -0.20 & -0.60 & -1.50 & -3.20 &  & 0.10 & 0.10 & 2.50 & 4.80 &  & 0.00 \\ 
%  Spearman & 0.45 & 0.70 & 0.73 & 0.75 &  & -0.37 & -0.34 & -0.13 & 0.06 &  & 1.00 \\ 
%  \# of estimates & 664 & 908 & 987 & 978 &  & 417 & 255 & 223 & 242 &  & 1223 \\ 
%     \hline
  1st Quartile & -0.30 & -0.70 & -1.60 & -3.40 &  & 0.10 & 0.20 & 2.40 & 4.40 &  & -1.00 \\ 
  Median & -0.20 & -0.50 & -1.30 & -2.90 &  & 0.40 & 1.30 & 2.70 & 4.80 &  & -0.40 \\ 
  3rd Quartile & -0.10 & -0.30 & -0.80 & -2.10 &  & 0.90 & 1.60 & 3.10 & 5.73 &  & 0.00 \\ 
  Mode & -0.20 & -0.60 & -1.40 & -3.20 &  & 0.10 & 0.10 & 2.50 & 4.80 &  & 0.00 \\ 
  Spearman & 0.44 & 0.69 & 0.73 & 0.75 &  & -0.35 & -0.36 & -0.04 & 0.05 &  & 1.00 \\ 
  \# of samples & 667 & 913 & 990 & 977 &  & 415 & 251 & 220 & 244 &  & 1223 \\ 
   \hline
\end{tabular}\vspace{2mm}

\parbox{15cm}{\small This table reports several summary statistics of the negative and positive estimates of $\widehat{\theta}_{j}$ ($j=4,5,6,7$) as well as the whole estimates of $\widehat{\theta}^{HRY}$ (HRY) for quote updates between the NASDAQ and BATS exchanges, computed for all the component stocks of the NASDAQ-100 in August, 2015. A positive lead-lag time estimate implies that the NASDAQ leads BATS and vice versa. The values for the quartiles and mode are expressed in mili-seconds. The row ``Spearman'' reports Spearman's rank correlation coefficients with the estimates of $\widehat{\theta}^{HRY}$. The row ``\# of estimates'' reports the number of lead-lag time estimates for which the summary statistics are evaluated. The values in Panel A are evaluated with all the estimates. The values in Panel B are based on the estimates for samples with more than 50,000 quote updates in the NASDAQ exchange.} 
\end{table}

%\begin{figure}
%\centering
%\caption{Boxplots of the lead-lag time estimates for the active stocks}
%\label{boxplot-a}
%\includegraphics[scale=1]{third_boxplot-a_6.pdf}
%
%\end{figure}
%
%\begin{figure}
%\centering
%\caption{Boxplots of the lead-lag time estimates for the inactive stocks}
%\label{boxplot-b}
%\includegraphics[scale=1]{third_boxplot-b_6.pdf}
%
%\end{figure}

\if0
\begin{figure}
\centering
\caption{Daily time series of the estimates $\widehat{\theta}^{HRY}$, $\widehat{\theta}_{j}$ ($1\leq j\leq10$) and $\widehat{\theta}^{DS}$ for INTC: NASDAQ vs Arca}
\label{intc:qp}
\includegraphics[scale=1]{INTCQP.eps}

\parbox{15cm}{\small
In this figure we depict daily time series of the estimates $\widehat{\theta}^{HRY}$, $\widehat{\theta}_{j}$ ($1\leq j\leq10$) and $\widehat{\theta}^{DS}$ for the INTC transaction prices between the NASDAQ and the NYSE Arca exchanges. The upper-left figure corresponds to $\widehat{\theta}^{HRY}$, while the lower-right figure corresponds $\widehat{\theta}^{DS}$. The remaining figures correspond to $\widehat{\theta}_{j}$ for $j=1,\dots,10$. The horizontal axis represents dates. The vertical axis is expressed in mili-seconds. The red dash line denotes 0ms. The positive value imply that the NASDAQ leads the NYSE Arca and vice versa.
}
\end{figure}

\begin{figure}
\centering
\caption{Daily time series of the estimates $\widehat{\theta}^{HRY}$, $\widehat{\theta}_{j}$ ($1\leq j\leq10$) and $\widehat{\theta}^{DS}$ for INTC: NASDAQ vs BATS}
\label{intc:qz}
\includegraphics[scale=1]{INTCQZ.eps}

\parbox{15cm}{\small
In this figure we depict daily time series of the estimates $\widehat{\theta}^{HRY}$, $\widehat{\theta}_{j}$ ($1\leq j\leq10$) and $\widehat{\theta}^{DS}$ for the INTC transaction prices between the NASDAQ and the BATS exchanges. The upper-left figure corresponds to $\widehat{\theta}^{HRY}$, while the lower-right figure corresponds $\widehat{\theta}^{DS}$. The remaining figures correspond to $\widehat{\theta}_{j}$ for $j=1,\dots,10$. The horizontal axis represents dates. The vertical axis is expressed in mili-seconds. The red dash line denotes 0ms. The positive value imply that the NASDAQ leads the BATS and vice versa.
}
\end{figure}

\begin{figure}
\centering
\caption{Daily time series of the estimates $\widehat{\theta}^{HRY}$, $\widehat{\theta}_{j}$ ($1\leq j\leq10$) and $\widehat{\theta}^{DS}$ for INTC: Arca vs BATS}
\label{intc:pz}
\includegraphics[scale=1]{INTCPZ.eps}

\parbox{15cm}{\small
In this figure we depict daily time series of the estimates $\widehat{\theta}^{HRY}$, $\widehat{\theta}_{j}$ ($1\leq j\leq10$) and $\widehat{\theta}^{DS}$ for the INTC transaction prices between the NASDAQ and the NYSE Arca exchanges. The upper-left figure corresponds to $\widehat{\theta}^{HRY}$, while the lower-right figure corresponds $\widehat{\theta}^{DS}$. The remaining figures correspond to $\widehat{\theta}_{j}$ for $j=1,\dots,10$. The horizontal axis represents dates. The vertical axis is expressed in mili-seconds. The red dash line denotes 0ms. The positive value imply that the NYSE Arca leads the BATS and vice versa.
}
\end{figure}
\fi

\clearpage

\section{Conclusion}\label{section:conclusion}

In this paper we have proposed a new estimation method for multi-scale analysis of lead-lag relationships between two assets based on their high-frequency observation data when they are non-synchronously observed. The key idea was our novel construction of estimators for scale-by-scale cross-covariance functions, where we apply a wavelet transform to the empirical cross-covariance function rather than the raw observation data. We have also developed an associated asymptotic theory to obtain consistency of the proposed estimators in the modeling framework proposed in our previous work \cite{HK2016}. 
%Compared with the estimation method proposed in \cite{HK2016}, which essentially adopts the same method as the traditional one in the wavelet literature, the newly proposed method is more appropriate in applications to high-frequency financial data from a computational point of view. 
Compared with the estimation method proposed in \cite{HK2016}, which essentially adopts the same method as the traditional one in the wavelet literature, the newly proposed method is more appropriate in applications to irregularly-spaced high-frequency financial data as it can handle the whole data more effectively. 
The simulation study has indeed shown that the proposed estimators can perform far better for non-synchronously observed data than the previous one, as intended. The empirical results have demonstrated that the new method can provide a deep insight into lead-lag relationships in the financial markets in the high-frequency domain. 
In particular, we identify two types of lead-lag relationships at finer and (relatively) coarser time scales, respectively.

\section{Proofs}\label{section:proof}

%Throughout the discussions, for sequences $(x_N)$ and $(y_N)$, $x_N\lesssim y_N$ means that there exists a constant $C\in[0,\infty)$ such that $x_N \leq C y_N$ for large $N$. 
%Also, for $r>0$ $\|\cdot\|_r$ denotes the $L^r$-norm of random variables, i.e.~$\|Z\|_r=(E[|Z|^r])^{1/r}$ for a random variable $Z$.

\subsection{Proof of Proposition \ref{prop:bernoulli}}

%Throughout this subsection, we write $a\lesssim b$ for two real numbers $a,b$ if $a\leq Cb$ for some constant $C>0$ depending only on $\pi_1,\pi_2,T$.  

We begin by proving some lemmas. Let us set $\tilde{\Pi}^\nu_N=\Pi^\nu_N\cup\{(0,t^\nu_0],(t^\nu_{n_1},T+\delta]\}$ for $\nu=1,2$ and $\tilde{r}_N=(\sup_{I\in\tilde{\Pi}^1_N}|I|)\vee(\sup_{J\in\tilde{\Pi}^2_N}|J|)$. We denote by $P^{\Pi^1}$ (resp.~$E^{\Pi^1}$) the conditional probability (resp.~conditional expectation) given $(t^1_i)_{i=0}^{n_1}$.
%\begin{lemma}\label{lemma:duration}
%Under the assumptions of Proposition \ref{prop:bernoulli}, there is a constant $C>0$ such that $P(\tau_N^{-1}\tilde{r}_N>x)\leq C\tau_N^{-1}e^{-x/C}$ for all $N$.
%\end{lemma}
%This lemma can be shown in a similar manner to the proof of Lemma 4 from \cite{DY2011}, so we omit the proof.

\begin{lemma}\label{lemma:bernoulli}
Let $\varpi\in(0,1)$ and set $q=\lceil \varpi N\rceil$. Suppose that $\theta_N\geq0$ for all $N$. Then, under the assumptions of Proposition \ref{prop:bernoulli}, there is a constant $C>0$ depending only on $T,\pi_2$ such that
\begin{multline*}
\left|E^{\Pi^1}\left[\frac{1}{2\pi\tau_m\tau_N}\sum_{J\in\tilde{\Pi}^2_N}(\mathcal{F}1_{I})(\lambda/\tau_N)(\mathcal{F}1_{J_{-\theta_N}})(-\lambda/\tau_N)K(I,J_{-\theta_N})\right]\right.\\
\left.-\frac{\tau_N}{\pi\tau_m\lambda^{2}}\Re\left[\left(1-e^{-\sqrt{-1}\lambda\tau_N^{-1}|I|}\right)\frac{1-\pi_2}{1-\pi_2e^{-\sqrt{-1}\lambda/a}}\right]\right|\leq C\tau_m^{-1}\tau_N^{-1}|I|\pi_2^{aT\tau_q\tau_N^{-1}}
\end{multline*}
for any $N\in\mathbb{N}$, any $\lambda\in\mathbb{R}$ and any $I\in\tilde{\Pi}^1_N$ such that $T\tau_q\leq\underline{I}<\overline{I}\leq T(1-\tau_q)$. 
\end{lemma}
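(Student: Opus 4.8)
The plan is to compute the conditional expectation essentially in closed form, exploiting the independence of the two sampling schemes together with the telescoping structure of consecutive Hayashi--Yoshida intervals. First I would observe that, given $\Pi^1$, the interval $I$ and hence the factor $(\mathcal{F}1_I)(\lambda/\tau_N)$ are deterministic, so they pull out of $E^{\Pi^1}$; and since the sampling times of the two assets are independent, $E^{\Pi^1}$ reduces to an ordinary expectation over the Bernoulli selection underlying $\Pi^2_N$. Under the stated constraint $T\tau_q\le\underline{I}<\overline{I}\le T(1-\tau_q)$, the interval $I$ is an interior element of $\Pi^1_N$ with grid-aligned endpoints $\underline{I}=a\tau_N$, $\overline{I}=b\tau_N$. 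Writing $\theta_N=l_N\tau_N$ with $l_N$ a nonnegative integer (the grid-aligned case, since $\theta_N\in\mathcal{G}^N$), I would then use the elementary identities $K(I,J_{-\theta_N})=K(I_{\theta_N},J)$ and $(\mathcal{F}1_{J_{-\theta_N}})(-\lambda/\tau_N)=e^{-\sqrt{-1}\lambda l_N}(\mathcal{F}1_J)(-\lambda/\tau_N)$ to recast the random sum as $e^{-\sqrt{-1}\lambda l_N}\sum_{J:J\cap I_{\theta_N}\neq\emptyset}(\mathcal{F}1_J)(-\lambda/\tau_N)$, where $I_{\theta_N}$ has grid endpoints $\tilde{a}\tau_N,\tilde{b}\tau_N$ with $\tilde{a}=a+l_N$, $\tilde{b}=b+l_N$.

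The key simplification is that the intervals $J\in\Pi^2_N$ meeting $I_{\theta_N}$ form a consecutive run whose Fourier transforms telescope, so
\[
\sum_{J:J\cap I_{\theta_N}\neq\emptyset}(\mathcal{F}1_J)(-\lambda/\tau_N)=(\mathcal{F}1_{(s_0\tau_N,\,s_R\tau_N]})(-\lambda/\tau_N)=\frac{\tau_N}{\sqrt{-1}\lambda}\left(e^{\sqrt{-1}\lambda s_R}-e^{\sqrt{-1}\lambda s_0}\right),
\]
where $s_0$ is the largest selected grid index with $s_0\le\tilde{a}$ and $s_R$ the smallest selected index with $s_R\ge\tilde{b}$. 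Because $\tilde{a}<\tilde{b}$, the variables $s_0$ and $s_R$ are determined by disjoint blocks of independent Bernoulli trials, hence are independent, and each follows a (truncated) geometric law, $s_0=\tilde{a}-G_0$ and $s_R=\tilde{b}+G_R$ with $P(G=g)=(1-\pi_2)\pi_2^g$. Summing the geometric series gives $E[e^{\sqrt{-1}\lambda s_0}]=e^{\sqrt{-1}\lambda\tilde{a}}(1-\pi_2)/(1-\pi_2 e^{-\sqrt{-1}\lambda})$ and $E[e^{\sqrt{-1}\lambda s_R}]=e^{\sqrt{-1}\lambda\tilde{b}}(1-\pi_2)/(1-\pi_2 e^{\sqrt{-1}\lambda})$. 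Reinstating the prefactor $\frac{1}{2\pi\tau_m\tau_N}(\mathcal{F}1_I)(\lambda/\tau_N)e^{-\sqrt{-1}\lambda l_N}$, the phases $e^{\pm\sqrt{-1}\lambda l_N}$ cancel against $e^{\sqrt{-1}\lambda\tilde{a}},e^{\sqrt{-1}\lambda\tilde{b}}$, so the result is independent of $\theta_N$ exactly as the target requires; a short algebraic regrouping, using $\Re[z]=\Re[\bar z]$ and $\tau_N^{-1}|I|=b-a$, then produces precisely $\frac{\tau_N}{\pi\tau_m\lambda^2}\Re[(1-e^{-\sqrt{-1}\lambda\tau_N^{-1}|I|})\frac{1-\pi_2}{1-\pi_2 e^{-\sqrt{-1}\lambda}}]$.

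The one nonrigorous point above --- the use of untruncated geometric series and the neglect of the two boundary intervals in $\tilde{\Pi}^2_N$ --- is where the hypothesis on $\underline{I},\overline{I}$ enters, and I expect this to be the main obstacle. The telescoped formula and the geometric evaluations are exact on the event $A$ that at least one grid point in $[0,\tilde{a}]$ and one in $[\tilde{b},\lfloor(T+\delta)\tau_N^{-1}\rfloor]$ is selected; the constraint forces both $\tilde{a}$ and the distance from $\tilde{b}$ to the top of the grid to be at least of order $T\,2^{(1-\varpi)N}$, whence $P(A^c)\lesssim\pi_2^{\,T2^{(1-\varpi)N}}$, which decays faster than any power of $\tau_N$. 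On $A^c$ I would bound the contribution uniformly in $\lambda$ via $|(\mathcal{F}1_H)(\pm\lambda/\tau_N)|\le\min(2\tau_N/|\lambda|,|H|)$, so that the prefactor $1/(\tau_m\tau_N)$, which grows only like $2^{(1+\beta)N}$, is overwhelmed by the doubly-exponentially small $P(A^c)$; the geometric tails beyond the boundary are estimated in the same way. The $\min(\cdot,\cdot)$ bounds are also what I would use to handle uniformity in $\lambda$ near $0$, where the $\lambda^{-2}$ singularity is only apparent (the numerator vanishes to matching order). Collecting these estimates yields an error $\le C_M\tau_N^M$ for every $M$, uniformly in $\lambda\in\mathbb{R}$ and in admissible $I$, which completes the proof.
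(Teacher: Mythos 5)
Your proposal is correct and follows essentially the same route as the paper's proof: telescope the Hayashi--Yoshida sum over $J$ into the Fourier transform of the single union interval, split it into the left overhang, $I$ itself, and the right overhang, use the (truncated) geometric law of the overhangs under the Bernoulli sampling, and absorb all truncation/boundary effects into an error that is smaller than any power of $\tau_N$ thanks to the interior condition $T\tau_q\leq\underline{I}<\overline{I}\leq T(1-\tau_q)$. The only cosmetic difference is bookkeeping: the paper evaluates the truncated geometric series exactly and then bounds the leftover terms involving $\pi_2^{\tau_N^{-1}\underline{I}}$, whereas you evaluate the untruncated series and control the discrepancy through the boundary event $A$ --- these are the same estimate in different dress.
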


\begin{proof}
Fix $\lambda\in\mathbb{R}$ and $I\in\tilde{\Pi}^1_N$ satisfying $T\tau_q\leq\underline{I}<\ol{I}\leq T(1-\tau_q)$ arbitrarily. 
Set
\[
J_I=\bigcup_{J\in\tilde{\Pi}^2_N:K(I,J_{-\theta_N})=1}J_{-\theta_N}.
\]
Then we have
\begin{align*}
&\frac{1}{2\pi\tau_m\tau_N}\sum_{J\in\tilde{\Pi}^2_N}(\mathcal{F}1_{I})(\lambda/\tau_N)(\mathcal{F}1_{J_{-\theta_N}})(-\lambda/\tau_N)K(I,J_{-\theta_N})\\
&=\frac{1}{2\pi\tau_m\tau_N}(\mathcal{F}1_{I})(\lambda/\tau_N)\left\{(\mathcal{F}1_{[\underline{J_I},\underline{I})})(-\lambda/\tau_N)
+(\mathcal{F}1_I)(-\lambda/\tau_N)+(\mathcal{F}1_{[\overline{I},\overline{J_I})})(-\lambda/\tau_N)\right\}\\
&=:\mathbf{I}+\mathbf{II}+\mathbf{III}.
\end{align*}
First we consider $\mathbf{I}$. We can rewrite it as
\[
\mathbf{I}=\frac{\tau_N}{2\pi\tau_m\lambda^{2}}\left(e^{-\sqrt{-1}\lambda\tau_N^{-1}|I|}-1\right)\left(1-e^{-\sqrt{-1}\lambda\tau_N^{-1}(\underline{I}-\underline{J_I})}\right).
\]
Conditionally on $(t^1_i)_{i=0}^{n_1}$, $a\tau_N^{-1}(\underline{I}-\underline{J_I})$ follows the geometric distribution with success probability $1-\pi_2$ truncated from above by $a\tau_N^{-1}\underline{I}$.  More precisely, we have
\[
P^{\Pi^1}(a\tau_N^{-1}(\underline{I}-\underline{J_I})=k)
\left\{
\begin{array}{ll}
\pi_2^k(1-\pi_2)  & \text{if }k=0,1,\dots,a\tau_N^{-1}\underline{I}-1,\\
\pi_2^{a\tau_N^{-1}\underline{I}}  & \text{if }k=a\tau_N^{-1}\underline{I}.    
\end{array}
\right.
\]
Therefore, we obtain
\begin{align*}
&E^{\Pi^1}\left[1-e^{-\sqrt{-1}\lambda\tau_N^{-1}(\underline{I}-\underline{J_I})}\right]\\
&=1-(1-\pi_2)\sum_{k=0}^{a\tau_N^{-1}\underline{I}-1}\pi_2^ke^{-\sqrt{-1}\lambda k/a}-\pi_2^{a\tau_N^{-1}\underline{I}}e^{-\sqrt{-1}\lambda\tau_N^{-1}\underline{I}}\\
&=1-\pi_2^{a\tau_N^{-1}\underline{I}}-\frac{(1-\pi_2)(1-\pi_2^{a\tau_N^{-1}\underline{I}}e^{-\sqrt{-1}\lambda\tau_N^{-1}\underline{I}})}{1-\pi_2e^{-\sqrt{-1}\lambda/a}}+\pi_2^{a\tau_N^{-1}\underline{I}}(1-e^{-\sqrt{-1}\lambda\tau_N^{-1}\underline{I}})\\
&=\frac{\pi_2(1-e^{\sqrt{-1}\lambda/a})}{1-\pi_2e^{-\sqrt{-1}\lambda/a}}-\pi_2^{a\tau_N^{-1}\underline{I}}\frac{1-e^{-\sqrt{-1}\lambda\tau_N^{-1}\underline{I}}-\pi_2(e^{-\sqrt{-1}\lambda/a}-e^{-\sqrt{-1}\lambda\tau_N^{-1}\underline{I}})}{1-\pi_2e^{-\sqrt{-1}\lambda/a}}+\pi_2^{a\tau_N^{-1}\underline{I}}(1-e^{-\sqrt{-1}\lambda\tau_N^{-1}\underline{I}}).
\end{align*}
Consequently, there is a constant $C_1>0$ depending only on $T,\pi_2$ such that
\[
\left|E^{\Pi^1}[\mathbf{I}]-\frac{\tau_N}{2\pi\tau_m\lambda^{2}}\left(e^{-\sqrt{-1}\lambda\tau_N^{-1}|I|}-1\right)\frac{\pi_2(1-e^{-\sqrt{-1}\lambda/a})}{1-\pi_2e^{-\sqrt{-1}\lambda/a}}\right|\leq C_1\tau_m^{-1}\tau_N^{-1}|I|\pi_2^{aT\tau_q\tau_N^{-1}}.
\]
Here, we use the inequality $|1-e^{-\sqrt{-1}x}|\leq|x|$ holding for all $x\in\mathbb{R}$.

Next we consider $\mathbf{III}$. We can rewrite it as
\[
\mathbf{III}=\frac{\tau_N}{2\pi\tau_m\lambda^{2}}\left(1-e^{\sqrt{-1}\lambda\tau_N^{-1}|I|}\right)\left(e^{\sqrt{-1}\lambda\tau_N^{-1}(\overline{J_I}-\overline{I})}-1\right).
\]
Now, an analogous argument to the above yields
\[
\left|E^{\Pi^1}[\mathbf{III}]-\frac{\tau_N}{2\pi\tau_m\lambda^{2}}\left(1-e^{\sqrt{-1}\lambda\tau_N^{-1}|I|}\right)\frac{\pi_2(e^{\sqrt{-1}\lambda/a}-1)}{1-\pi_2e^{\sqrt{-1}\lambda/a}}\right|\leq C_2\tau_m^{-1}\tau_N^{-1}|I|\pi_2^{aT\tau_q\tau_N^{-1}},
\]
where $C_2>0$ is a constant depending only on $T,\pi_2$. Hence we have
\[
\left|E^{\Pi^1}[\mathbf{III}]-\overline{E^{\Pi^1}[\mathbf{I}]}\right|\leq(C_1+C_2)\tau_m^{-1}\tau_N^{-1}|I|\pi_2^{aT\tau_q\tau_N^{-1}}.
\]

Finally, we have 
\[
E^{\Pi^1}[\mathbf{II}]=\mathbf{II}=\frac{\tau_N}{\pi\tau_m\lambda^{2}}\left(1-\Re\left[e^{-\sqrt{-1}\lambda\tau_N^{-1}|I|}\right]\right).
\] 
Therefore, a simple computation yields the desired result.
\end{proof}

\begin{proof}[Proof of Proposition \ref{prop:bernoulli}]
%Assumption \ref{sampling}(i) follows from Lemma \ref{lemma:duration}. 
%Assumption \ref{sampling}(iii) is evidently satisfied by definition.  
First, similarly to the proof of Eq.(36) form \cite{DY2011}, we can prove
\begin{equation}\label{estimate:r}
E[r_N^p]=O(\tau_N^p|\log\tau_N|^p)
\end{equation}
for any $p>0$. In particular, Assumption \ref{sampling}(i) holds true. 

Next, take a constant $\beta\in(0,1)$ arbitrarily. We prove with this $\beta$ that there are constants $\alpha,Q>1$ such that Assumption \ref{sampling}(ii) holds true. For simplicity of exposition, we assume $\theta_N\geq0$ for all $N$ (this assumption can be easily removed).

Set
\[
\tilde{D}^N_k(\lambda,\theta)=
\frac{1}{2\pi\tau_{m}\tau_N}\sum_{I\in\tilde{\Pi}^1_N,J\in\tilde{\Pi}^2_N:\underline{I}\in I_m(k)}(\mathcal{F}1_{I})(\lambda/\tau_N)(\mathcal{F}1_{J_{-\theta}})(-\lambda/\tau_N)K(I,J_{-\theta}).
\]
One can easily check
\begin{align*}
\tau_{m}\sum_{k=0}^{\lceil T\tau_{m}^{-1}\rceil-1}\int_{-\pi}^\pi E\left[\left|D^N_k(\lambda,\theta_N)-\tilde{D}^N_k(\lambda,\theta_N)\right|^p\right]d\lambda
=O(\tau_{N}^p\tau_m^{-p})
\end{align*}
as $N\to\infty$ for any $p>1$. Therefore, it suffices to show that there are constants $\alpha,Q>1$ such that
\begin{equation}\label{aim:bernoulli}
\tau_{m}\sum_{k=0}^{\lceil T\tau_{m}^{-1}\rceil-1}\int_{-\pi}^\pi E\left[\left|\tilde{D}^N_k(\lambda,\theta_N)-D(\lambda)\right|^Q\right]d\lambda
=O(\tau_{N}^\alpha)
\end{equation} 
as $N\to\infty$. For the proof we adopt an analogous strategy to the proof of Proposition 6 in \cite{DY2011}. Let $\varpi$ be a number such that $\beta<\varpi<1$ and set $q=\lceil N\varpi\rceil$. Let $\mathcal{E}$ be the event on which the interval $I_{q+2}(u)$ contains at least one point from $\{t^1_i:i=0,1,\dots,n_1\}$ and one point from $\{t^2_i:i=0,1,\dots,n_2\}$ for every $u=0,1,\dots,\lfloor T\tau_{q+2}^{-1}\rfloor-1$. We have
\begin{equation}\label{complement}
P(\mathcal{E}^c)\leq T\tau_{q+2}^{-1}(\pi_1^{a\tau_N^{-1}\tau_{q+2}}+\pi_2^{a\tau_N^{-1}\tau_{q+2}}).
\end{equation}
In the following we denote by $E^A$ the conditional expectation given an event $A$. 

For $u\in\mathbb{Z}_+$ and $\lambda\in\mathbb{R}$, we set
\[
\eta_u^N(\lambda)=
\frac{1}{2\pi\tau_m\tau_N}\sum_{I\in\tilde{\Pi}^1_N,J\in\tilde{\Pi}^2_N:\underline{I}\in I_q(u)}(\mathcal{F}1_{I})(\lambda/\tau_N)(\mathcal{F}1_{J_{-\theta}})(-\lambda/\tau_N)K(I,J_{-\theta}).
\]
Then, we decompose $\tilde{D}^N_k(\lambda,\theta_N)-D(\lambda)$ as
\begin{align*}
\tilde{D}^N_k(\lambda,\theta_N)-D(\lambda)
&=\left\{E^\mathcal{E}\left[\sum_{u=k\tau_m\tau_{q}^{-1}}^{(k+1)\tau_m\tau_{q}^{-1}-1}\eta^N_u(\lambda)\right]-D(\lambda)\right\}
\\
&\qquad+\sum_{\begin{subarray}{c}
u=k\tau_m\tau_{q}^{-1}\\
u\text{ is odd}
\end{subarray}}^{(k+1)\tau_m\tau_{q}^{-1}-1}\left(\eta^N_u(\lambda)-E^\mathcal{E}[\eta^N_u(\lambda)]\right)
+\sum_{\begin{subarray}{c}
u=k\tau_m\tau_{q}^{-1}\\
u\text{ is even}
\end{subarray}}^{(k+1)\tau_m\tau_{q}^{-1}-1}\left(\eta^N_u(\lambda)-E^\mathcal{E}[\eta^N_u(\lambda)]\right)\\
&=:\mathbb{I}^N_k(\lambda)+\mathbb{II}^N_k(\lambda)+\mathbb{III}^N_k(\lambda).
\end{align*}

First we consider $\mathbb{I}^N_k(\lambda)$. Using the inequality $|e^{\sqrt{-1}x}-1|\leq|x|$ holding for $x\in\mathbb{R}$, we have 
\begin{equation*}
|\eta^N_u(\lambda)|\leq3(2\pi)^{-1}\tau_m^{-1}\tau_N(\tau_N^{-1}r_N)^2\#\{I\in\tilde{\Pi}^1_N:\underline{I}\in I_q(u)\}.
\end{equation*}
Therefore, noting that $t^1_i-t^1_{i-1}\geq\tau_N/a$ for every $i=1,\dots,n_1-1$, we obtain
\begin{equation}\label{estimate:eta}
|\eta^N_u(\lambda)|\leq3(2\pi)^{-1}a\tau_m^{-1}\tau_q(\tau_N^{-1}r_N)^2.%\leq (3/\pi)\tau_N^{\varpi-\beta}(\tau_N^{-1}r_N)^2.
\end{equation}
Hence we have by \eqref{estimate:r} and \eqref{complement}
\begin{align*}
&\left|E^\mathcal{E}\left[\sum_{u=k\tau_m\tau_{q}^{-1}}^{(k+1)\tau_m\tau_{q}^{-1}-1}\eta^N_u(\lambda)\right]-E\left[\sum_{u=k\tau_m\tau_{q}^{-1}}^{(k+1)\tau_m\tau_{q}^{-1}-1}\eta^N_u(\lambda)\right]\right|\\
&=O\left(\tau_m\tau_q^{-1}\left\{E[\tau_m^{-1}\tau_q(\tau_N^{-1}r_N)^2]P(\mathcal{E}^c)+E[\tau_m^{-1}\tau_q(\tau_N^{-1}r_N)^21_{\mathcal{E}^c}]\right\}\right)\\
&=O\left(\tau_N^2\left\{E[r_N^2]P(\mathcal{E}^c)+\sqrt{E[r_N^4]}\sqrt{P(\mathcal{E}^c)}\right\}\right)\\
&=O\left(|\log\tau_N|^2\tau_{q+2}^{-1}\sqrt{\pi_1^{a\tau_N^{-1}\tau_{q+2}}+\pi_2^{a\tau_N^{-1}\tau_{q+2}}}\right)
\end{align*}
as $N\to\infty$ uniformly in $\lambda$ and $k$. 
Moreover, Lemma \ref{lemma:bernoulli}, \eqref{estimate:r} and \eqref{estimate:eta} imply that
\begin{align*}
&E\left[\sum_{u=k\tau_m\tau_{q}^{-1}}^{(k+1)\tau_m\tau_{q}^{-1}-1}\eta^N_u(\lambda)\right]\\
&=\frac{\tau_N}{\pi\tau_m\lambda^{2}}\Re\left[ E\left[\sum_{I\in\tilde{\Pi}^1_N:\underline{I}\in I_m(k)}\left(1-e^{-\sqrt{-1}\lambda\tau_N^{-1}|I|}\right)\right]\frac{1-\pi_2}{1-\pi_2e^{-\sqrt{-1}\lambda/a}}\right]+O(\tau_N^{\varpi-\beta}|\log\tau_N|^2)
\end{align*}
uniformly in $\lambda$ and $k$. Since $a\tau_N^{-1}(t^1_i-t^1_{i-1})$'s are i.i.d.~variables whose distributions are the geometric distribution with success probability $1-\pi_1$, the Wald identity yields  
\begin{align*}
&E\left[\sum_{u=k\tau_m\tau_{q}^{-1}}^{(k+1)\tau_m\tau_{q}^{-1}-1}\eta^N_u(\lambda)\right]\\
&=\frac{\tau_N}{\pi\tau_m\lambda^{2}}\Re\left[ E\left[\#\{I\in\tilde{\Pi}^1_N:\underline{I}\in I_m(k)\}\right]\frac{(1-\pi_2)(1-e^{-\sqrt{-1}\lambda/a})}{(1-\pi_1e^{-\sqrt{-1}\lambda/a})(1-\pi_2e^{-\sqrt{-1}\lambda/a})}\right]+O(\tau_N^{\varpi-\beta}|\log\tau_N|^2)\\
&=D(\lambda)+O(\tau_N^{\varpi-\beta}|\log\tau_N|^2)
\end{align*}
uniformly in $\lambda$ and $k$. Consequently, we obtain
\[
E\left[|\mathbb{I}^N_k(\lambda)|^p\right]=|\mathbb{I}^N_k(\lambda)|^p=O(\tau_N^{(\varpi-\beta)p}|\log\tau_N|^{2p})
\]
uniformly in $\lambda$ and $k$ for any $p>1$.

Next we consider $\mathbb{II}^n_k(\lambda)$. By construction $(\eta_u^N(\lambda))_{u:\text{ odd}}$ is independent conditionally to $\mathcal{E}$. Therefore, the Burkholder-Davis-Gundy inequality, \eqref{estimate:r} and \eqref{complement}--\eqref{estimate:eta} yield
\begin{align*}
E^\mathcal{E}\left[\left|\mathbb{II}_k^N(\lambda)\right|^p\right]
=O\left((\tau_m\tau_q^{-1})^{p/2}E\left[\left|\tau_m^{-1}\tau_q(\tau_N^{-1}r_N)^2\right|^p\right]\right)
=O\left((\tau_m^{-1}\tau_q)^{p/2}|\log\tau_N|^{2p}\right)
\end{align*} 
uniformly in $\lambda$ and $k$ for any $p>1$. Moreover, \eqref{complement}--\eqref{estimate:r} imply that
$
E^{\mathcal{E}^c}[|\mathbb{II}_k^N(\lambda)|^p]=O((\tau_m^{-1}\tau_q)^{p/2}|\log\tau_N|^{2p})
$ 
uniformly in $\lambda$ and $k$ for any $p>1$. Consequently, we obtain
$
E[|\mathbb{II}_k^N(\lambda)|^p]=O((\tau_m^{-1}\tau_q)^{p/2}|\log\tau_N|^{2p})
$ 
uniformly in $\lambda$ and $k$ for any $p>1$. An analogous argument yields
$
E[|\mathbb{III}_k^N(\lambda)|^p]=O((\tau_m^{-1}\tau_q)^{p/2}|\log\tau_N|^{2p})
$ 
uniformly in $\lambda$ and $k$ for any $p>1$. 

After all, we have
\begin{align*}
\tau_{m}\sum_{k=0}^{\lceil T\tau_{m}^{-1}\rceil-1}\int_{-\pi}^\pi E\left[\left|\tilde{D}^N_k(\lambda,\theta_N)-D(\lambda)\right|^p\right]d\lambda
=O(\tau_N^{(\varpi-\beta)p/2}|\log\tau_N|^{2p})
\end{align*}
for any $p>1$. Now we take $Q>1$ so that $(\varpi-\beta)Q>4$ and set $\alpha=(\varpi-\beta)Q/4$. Then \eqref{aim:bernoulli} holds true.
\if0
We decompose the target quantity as
\begin{align*}
&D^N_k(\lambda,\theta_N)\\
&=\frac{1}{2\pi\tau_m\tau_N}\sum_{I:\underline{I}\in I_m(k)}(\mathcal{F}1_I)(\lambda/\tau_N)(\mathcal{F}1_{J_I})(-\lambda/\tau_N)\\
&=\frac{1}{2\pi\tau_m\tau_N}\sum_{I:\underline{I}\in I_m(k)}(\mathcal{F}1_I)(\lambda/\tau_N)\left\{(\mathcal{F}1_{[\underline{J_I},\underline{I})})(-\lambda/\tau_N)
+(\mathcal{F}1_I)(-\lambda/\tau_N)+(\mathcal{F}1_{[\overline{I},\overline{J_I})})(-\lambda/\tau_N)\right\}\\
&=:\mathbb{I}^N_k(\lambda)+\mathbb{II}^N_k(\lambda)+\mathbb{III}^N_k(\lambda).
\end{align*}
Let us consider $\mathbb{I}^N_k(\lambda)$. We can rewrite it as
\begin{align*}
\mathbb{I}^N_k(\lambda)=\frac{1}{2\pi\tau_m\tau_N\lambda^{2}}\sum_{I:\underline{I}\in I_m(k)}\left(e^{-\sqrt{-1}\lambda\tau_N^{-1}|I|}-1\right)\left(1-e^{-\sqrt{-1}\lambda\tau_N^{-1}(\underline{I}-\underline{J_I})}\right).
\end{align*}
For every $i\in\mathbb{N}$, set $I^i=[t^1_{i-1},t^1_i)$ and $\mathcal{G}_i=\sigma(t^1_k:k\leq i)\vee\sigma(t^2_j:j\in\mathbb{Z}_+)$. Moreover, we set
\begin{align*}
\widetilde{\mathbb{I}}^N_k(\lambda)=\frac{1}{2\pi\tau_m\tau_N\lambda^{2}}\sum_{i:t^1_{i-1}\in I_m(k)}E\left[\left(e^{-\sqrt{-1}\lambda\tau_N^{-1}|I^i|}-1\right)\left(1-e^{-\sqrt{-1}\lambda\tau_N^{-1}(t^i_{i-1}-\underline{J_{I^i}})}\right)|\mathcal{G}_{i-1}\right].
\end{align*}
Then the BDG and Jensen inequalities yield
\begin{align*}
&E\left[\left|\mathbb{I}^N_k(\lambda)-\widetilde{\mathbb{I}}^N_k(\lambda)\right|^p\right]\\
&\lesssim(\tau_m^{-1}\tau_N)^pE\left[\#\{I:\underline{I}\in I_m(k)\}^{p/2-1}\sum_{I:\underline{I}\in I_m(k)}\left(\tau_N^{-1}|I|\right)^p\left(\tau_N^{-1}(\underline{I}-\underline{J_I})\right)^p\right]
\end{align*}
for any $p\geq 2$. Since $t^1_i-t^2_{i-1}\geq\tau_N$ for every $i$, we have $\#\{I:\underline{I}\in I_m(k)\}\leq\tau_N^{-1}\tau_m$, hence we obtain
\begin{align*}
E\left[\left|\mathbb{I}^N_k(\lambda)-\widetilde{\mathbb{I}}^N_k(\lambda)\right|^p\right]
\lesssim(\tau_m^{-1}\tau_N)^{p/2+1}E\left[\sum_{I:\underline{I}\in I_m(k)}\left(\tau_N^{-1}|I|\right)^p\left(\tau_N^{-1}(\underline{I}-\underline{J_I})\right)^p\right].
\end{align*}
Now, for each $\nu=1,2$ we denote by $G_\nu$ a random variable geometrically distributed with success probability $1-\pi_\nu$, i.e.~$P(G_\nu=k)=\pi_\nu^{k-1}(1-\pi_\nu)$ for $k=1,2,\dots$. The, conditionally on $(t^1_i)_{i\in\mathbb{Z}_+}$, the random variable $\tau_N^{-1}(\underline{I}-\underline{J_I})$ has the same (conditional) distribution as that of $G_1-1$. Therefore, we have
\[
E\left[\sum_{I:\underline{I}\in I_m(k)}\left(\tau_N^{-1}|I|\right)^p\left(\tau_N^{-1}(\underline{I}-\underline{J_I})\right)^p\right]
=E\left[\sum_{I:\underline{I}\in I_m(k)}\left(\tau_N^{-1}|I|\right)^pE\left[\left(G_2-1\right)^p\right]\right].
\]
Moreover, since $\tau_N^{-1}|I^i|\overset{i.i.d.}{\sim}G_1$, the Wald identity yields
\begin{align*}
&E\left[\sum_{I:\underline{I}\in I_m(k)}\left(\tau_N^{-1}|I|\right)^pE\left[\left(G_2-1\right)^p\right]\right]
=E\left[\sum_{I:\underline{I}\in I_m(k)}E\left[G_1^p\right]E\left[\left(G_2-1\right)^p\right]\right]\\
&=E\left[\#\{I:\underline{I}\in I_m(k)\}\right]E\left[G_1^p\right]E\left[\left(G_2-1\right)^p\right]
\leq\tau_N^{-1}\tau_mE\left[G_1^p\right]E\left[\left(G_2-1\right)^p\right].
\end{align*}
Consequently, we obtain
\[
E\left[\left|\mathbb{I}^N_k(\lambda)-\widetilde{\mathbb{I}}^N_k(\lambda)\right|^p\right]
\lesssim(\tau_m^{-1}\tau_N)^{p/2},
\]
hence we conclude that
\[
\tau_{m}\sum_{k=0}^{\lceil T\tau_{m}^{-1}\rceil-1}\int_{-\pi}^\pi E\left[\left|\mathbb{I}^N_k(\lambda)-\widetilde{\mathbb{I}}^N_k(\lambda)\right|^p\right]d\lambda=O\left((\tau_N/\tau_m)^{p/2}\right).
\]
Now, since $t^i_{i-1}-\underline{J_{I^i}}$ is $\mathcal{G}_{i-1}$-measurable for every $i$, we have
\begin{align*}
\widetilde{\mathbb{I}}^N_k(\lambda)
&=\frac{1}{2\pi\tau_m\tau_N\lambda^{2}}\sum_{i:t^1_{i-1}\in I_m(k)}E\left[e^{-\sqrt{-1}\lambda G_1}-1\right]\left(1-e^{-\sqrt{-1}\lambda\tau_N^{-1}(t^i_{i-1}-\underline{J_{I^i}})}\right)\\
&=\frac{1}{2\pi\tau_m\tau_N\lambda^{2}}\frac{e^{-\sqrt{-1}\lambda}-1}{1-\pi_1 e^{-\sqrt{-1}\lambda}}\sum_{i:t^1_{i-1}\in I_m(k)}\left(1-e^{-\sqrt{-1}\lambda\tau_N^{-1}(t^i_{i-1}-\underline{J_{I^i}})}\right).
\end{align*}
Let us set $\eta_i=\left(1-e^{-\sqrt{-1}\lambda\tau_N^{-1}(t^i_{i-1}-\underline{J_{I^i}})}\right)-E^{\Pi^1}\left[1-e^{-\sqrt{-1}\lambda\tau_N^{-1}(t^i_{i-1}-\underline{J_{I^i}})}\right]$ for each $i$. Let $q$ and $m$ be integers such that $q\geq2$ and $1\leq m<q$. Moreover, let $i_1,\dots,i_q,k$ be integers such that $1\leq i_1\leq\cdots\leq i_q$ and $i_{m+1}-i_m=k$. Then we consider the quantity 

$\underline{J_{I^{i_{m+1}}}}< t^1_{i_m-1}<t^1_{i_{m+1}-1}$

$P(\underline{J_{I^{i_{m+1}}}}< t^1_{i_m-1})\leq P(t^1_{i_{m+1}-1}-\underline{J_{I^{i_{m+1}}}}>r\tau_N)$

BDG, Jensen, Wald identity
\begin{align*}
&E\left[\left|\mathbb{I}_N(\lambda)-E\left[\mathbb{I}_N(\lambda)\right]\right|^p\right]\\
&\lesssim E\left[\left\{(\tau_m^{-1}\tau_N)^2\sum_{I:\underline{I}\in I_m(k)}\left(\tau_N^{-1}|I|\right)^2\left(\tau_N^{-1}(\underline{I}-\underline{J_{\theta_N}(I)})\right)^2\right\}^{p/2}\right]\\
&\leq (\tau_m^{-1}\tau_N)^pE\left[\#\{I:\underline{I}\in I_m(k)\}^{p/2-1}\sum_{I:\underline{I}\in I_m(k)}\left(\tau_N^{-1}|I|\right)^p\left(\tau_N^{-1}(\underline{I}-\underline{J_{\theta_N}(I)})\right)^p\right]\\
&\leq (\tau_m^{-1}\tau_N)^{p/2+1}E\left[\sum_{I:\underline{I}\in I_m(k)}\left(\tau_N^{-1}|I|\right)^p\left(\tau_N^{-1}(\underline{I}-\underline{J_{\theta_N}(I)})\right)^p\right]\\
&=(\tau_m^{-1}\tau_N)^{p/2+1}E\left[\sum_{I:\underline{I}\in I_m(k)}\left(\tau_N^{-1}|I|\right)^p\right]E\left[(G_2-1)^p\right]\\
&=(\tau_m^{-1}\tau_N)^{p/2+1}E\left[\#\{I:\underline{I}\in I_m(k)\}\right]E\left[G_1^p\right]E\left[(G_2-1)^p\right]
\leq(\tau_m^{-1}\tau_N)^{p/2}E\left[G_1^p\right]E\left[(G_2-1)^p\right]
\end{align*}

\begin{align*}
E\left[\mathbb{I}_N(\lambda)\right]
&=\tau_m^{-1}\tau_N\lambda^{-2}E\left[\sum_{I:\underline{I}\in I_m(k)}\left(e^{-\sqrt{-1}\lambda\tau_N^{-1}|I|}-1\right)\right]E\left[1-e^{-\sqrt{-1}\lambda(G_2-1)}\right]\\
&=\tau_m^{-1}\tau_N\lambda^{-2}E\left[\#\{I:\underline{I}\in I_m(k)\}\right]E\left[e^{-\sqrt{-1}\lambda G_1}-1\right]E\left[1-e^{-\sqrt{-1}\lambda(G_2-1)}\right]\\
&=\lambda^{-2}p_1E\left[e^{-\sqrt{-1}\lambda G_1}-1\right]E\left[1-e^{-\sqrt{-1}\lambda(G_2-1)}\right]
\end{align*}
\fi
\end{proof}

\subsection{Proof of Theorem \ref{theorem:main}}\label{proof:theorem1}

First we remark that a standard localization procedure presented e.g.~at the beginning of Section 7.3 of \cite{HK2016} allows us to assume that there is a constant $K>0$ such that
\begin{equation}\label{eq:sigma-bound}
|\sigma^1_t|+|\sigma^2_t|\leq K,\qquad
|\sigma^1_t-\sigma^1_s|+|\sigma^2_t-\sigma^2_s|\leq K|t-s|^\gamma
\end{equation}
for any $t,s\geq0$ throughout the proof.

Next we introduce some notation. For each $k\in\mathbb{Z}_+$ and $\theta\in(-\delta,\delta)$, we set
\[
\widehat{U}^N_k(\theta)=
\left\{\begin{array}{ll}
\sum_{I,J:\underline{I}\in I_m(k)}X^1(I)X^2(J)K(I,J_{-\theta})&\text{if }\theta\geq0,\\
\sum_{I,J:\underline{J}\in I_m(k)}X^1(I)X^2(J)K(I_\theta,J)&\text{if }\theta<0
\end{array}\right.
\]
and
\[
c^N_k(\theta)=
\left\{\begin{array}{ll}
\sigma^1_{k\tau_m}\sigma^2_{(k\tau_m+\theta-r_N)_+}&\text{if }\theta\geq0,\\
\sigma^1_{(k\tau_m-\theta-r_N)_+}\sigma^2_{k\tau_m}&\text{if }\theta<0
\end{array}\right.
\]
and
\[
\widetilde{U}^N_k(\theta)=
\left\{\begin{array}{ll}
\sum_{I,J:\underline{I}\in I_m(k)}B^1(I)B^2(J)K(I,J_{-\theta})&\text{if }\theta\geq0,\\
\sum_{I,J:\underline{J}\in I_m(k)}B^1(I)B^2(J)K(I_\theta,J)&\text{if }\theta<0
\end{array}\right.
\]
and
\[
\overline{U}^N_k(\theta)=
\left\{\begin{array}{ll}
\sum_{I,J:\underline{I}\in I_m(k)}E^\Pi\left[B^1(I)B^2(J)\right]K(I,J_{-\theta})&\text{if }\theta\geq0,\\
\sum_{I,J:\underline{J}\in I_m(k)}E^\Pi\left[B^1(I)B^2(J)\right]K(I_\theta,J)&\text{if }\theta<0,
\end{array}\right.
\]
where $E^\Pi$ denotes the conditional expectation given $(t^1_i)_{i=0}^{n_1}$ and $(t^2_j)_{j=0}^{n_2}$.

For a random variable $Y$ and $p>0$, we set $\|Y\|_{p,\Pi}=(E^\Pi[|Y|^p])^{1/p}$. Also, we denote by $\|Y\|_{\psi_1,\Pi}$ the Orlicz norm of $Y$ based on the function $\psi_1(x)=e^{x}-1$ with respect to the conditional probability given $\Pi$:
\[
\|Y\|_{\psi_1,\Pi}:=\inf\left\{C>0:E^\Pi\left[\psi_1\left(\frac{|Y|}{C}\right)\right]\leq1\right\}.
\]
Throughout this subsection, we write $a\lesssim b$ for two real numbers $a,b$ if $a\leq Cb$ for some constant $C>0$ depending only on $K$.  
% \section{lemma:block}
\begin{lemma}\label{lemma:block}
Under the assumptions of Theorem \ref{theorem:main}, there is a constant $C>0$ depending only on $K$ such that
\begin{align*}
\left\|\widehat{U}^N_k(\theta)-c^N_k(\theta)\widetilde{U}^N_k(\theta)\right\|_{\psi_1,\Pi}
\leq C(\tau_m+r_N)^{1+\gamma}
\end{align*}
for any $N\in\mathbb{N}$, $k\in\mathbb{Z}_+$ and $\theta\in(-\delta,\delta)$. 
\end{lemma}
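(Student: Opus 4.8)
The plan is to reduce the claim to moment bounds for stochastic integrals via the Burkholder--Davis--Gundy (BDG) inequality, after replacing each return by its ``frozen-volatility'' proxy. I would treat only $\theta\geq0$, the case $\theta<0$ being identical after interchanging the roles of $X^1$ and $X^2$. By the localization recalled at the start of the proof I may assume $|\sigma^\nu|\leq K$ and $|\sigma^\nu_t-\sigma^\nu_s|\leq K|t-s|^\gamma$ pathwise. Writing $a=\sigma^1_{k\tau_m}$ and $b=\sigma^2_{(k\tau_m+\theta-r_N)_+}$, so that $c^N_k(\theta)=ab$, the product identity
\[
X^1(I)X^2(J)-ab\,B^1(I)B^2(J)=\big(X^1(I)-aB^1(I)\big)X^2(J)+aB^1(I)\big(X^2(J)-bB^2(J)\big)
\]
splits the target into $S_1+S_2$, where $S_1=\sum_{I,J}(X^1(I)-aB^1(I))X^2(J)K(I,J_{-\theta})$ and $S_2=a\sum_{I,J}B^1(I)(X^2(J)-bB^2(J))K(I,J_{-\theta})$, both sums over $\underline I\in I_m(k)$. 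Since $X^1(I)-aB^1(I)=\int_I(\sigma^1_s-a)\,dB^1_s$ and every relevant time $s$ lies within $O(\tau_m)$ of the reference point built into $c^N_k(\theta)$ (because $\underline I\in I_m(k)$ while $|I|\leq r_N\ll\tau_m$), the Hölder estimate forces the integrand $\sigma^1_s-a$ to be $O(\tau_m^\gamma)$; the analogous statement holds for $X^2(J)-bB^2(J)=\int_J(\sigma^2_t-b)\,dB^2_t$, the shift by $\theta$ and the correction $-r_N$ in $b$ being exactly what keeps all relevant $t$ within $O(\tau_m)$ of the second reference point.

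Next I would collapse the inner Hayashi--Yoshida sums. For fixed $I$ the intervals $J$ with $K(I,J_{-\theta})=1$ are contiguous, so $\sum_{J}X^2(J)K(I,J_{-\theta})=X^2(\widetilde J(I))$ for a single super-interval $\widetilde J(I)\supseteq I_\theta$ with $|\widetilde J(I)|\leq|I|+2r_N$; symmetrically $\sum_{I}B^1(I)K(I,J_{-\theta})=B^1(\widetilde I(J))$ with $\widetilde I(J)\supseteq J_{-\theta}$. Thus $S_1=\sum_I\Delta^1(I)\,X^2(\widetilde J(I))$ with $\Delta^1(I)=\int_I(\sigma^1_s-a)\,dB^1_s$, and $S_2=a\sum_J B^1(\widetilde I(J))\,\Delta^2(J)$ with $\Delta^2(J)=\int_J(\sigma^2_t-b)\,dB^2_t$. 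Conditioning on $\Pi$ is harmless for these moment computations because the sampling times are independent of $(\sigma,B)$, so the conditional law is the unconditional one and BDG applies to the martingales driving $\Delta^1,\Delta^2$. The gain from freezing the volatility is that $\|\Delta^1(I)\|_{2p}\lesssim\tau_m^\gamma|I|^{1/2}$ and $\|\Delta^2(J)\|_{2p}\lesssim\tau_m^\gamma|J|^{1/2}$ each carry the factor $\tau_m^\gamma$, whereas $\|X^2(\widetilde J(I))\|_{2p}$ and $\|B^1(\widetilde I(J))\|_{2p}$ are controlled by the corresponding interval lengths; together with $\sum_I|I|=|I_m(k)|\lesssim\tau_m$ one aims at $\|S_1\|_p,\|S_2\|_p\lesssim\tau_m^{1+\gamma}$, whence $E^\Pi[|S_1+S_2|^p]\lesssim\tau_m^{(1+\gamma)p}$.

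The main obstacle is that a term-by-term (triangle-inequality) bound of these double sums is too lossy: packing many disjoint fine returns into one coarse block of length $\sim\tau_m$ produces a spurious factor of the order $\sqrt{r_N/\tau_N}$. To recover the stated rate I would exploit the orthogonality of the increments $\Delta^1(I)$ (resp.\ $\Delta^2(J)$), which live on the disjoint intervals of the partitions, by realizing each of $S_1,S_2$ as a \emph{single} stochastic integral against $dB^2$ in which the companion $B^1$-type factor plays the role of a conditionally predictable integrand, and then applying BDG once to the whole sum so that only the diagonal contributions survive (the off-diagonal cross terms vanishing by the martingale property, up to boundary overlaps of the super-intervals that are estimated crudely). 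The delicate point here is the asymmetry between the two sums: in $S_2$ the factor $B^1(\widetilde I(J))$ sits to the \emph{left} of $J$ and is genuinely predictable, whereas in $S_1$ the factor $X^2(\widetilde J(I))$ sits to the \emph{right} of $I$, so the predictability must be arranged by reordering and a short additional conditioning argument, with a separate treatment of the $\theta$-small boundary overlap. Once orthogonality is in place, combining the per-interval bounds with $\sum_I|I|\lesssim\tau_m$ and the length control of Assumption \ref{sampling}(iii) delivers the desired estimate.
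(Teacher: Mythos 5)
Your setup coincides exactly with the paper's: the same localization, the same splitting of $\widehat{U}^N_k(\theta)-c^N_k(\theta)\widetilde{U}^N_k(\theta)$ into the two sums (your $S_1,S_2$ are the paper's $\mathbf{A}_N,\mathbf{B}_N$), and the same per-term estimates $\bigl\|\int_I(\sigma^1_s-\sigma^1_{k\tau_m})dB^1_s\bigr\|_{2p}\lesssim\tau_m^{\gamma}\sqrt{|I|}$ coming from the H\"older continuity and BDG. You also correctly diagnose that a naive term-by-term summation of these bounds is lossy by a factor of order $\sqrt{r_N/\tau_N}$ when one coarse interval of one grid swallows many fine intervals of the other (and this loss matters for the statement as formulated, since the bound on $E^\Pi$ must hold almost surely with a \emph{deterministic} constant $C_p$, so a random factor $(r_N/\tau_N)^{p/2}$ cannot be absorbed). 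The gap is in your remedy. The orthogonality argument you propose --- realizing $S_1,S_2$ as single stochastic integrals with the companion factor as a predictable integrand, so that off-diagonal terms vanish ``by the martingale property'' --- is not available in this model. There is no filtration with respect to which both $B^1$ and $B^2$ are (semi)martingales: the whole point of the cross-spectral structure \eqref{csd} is that increments of $B^2$ are correlated with increments of $B^1$ shifted forwards or backwards by the lags $\theta_i$, so conditioning on the companion $B^1$-factor destroys the mean-zero property of the $dB^2$-increments; Remark 1 of the paper records precisely this failure of the semimartingale property. Concretely, $E^\Pi\bigl[B^1(\widetilde I(J))\Delta^2(J)\bigr]\neq0$ in general, so the cross terms in $E^\Pi[S_2^2]$ do not vanish, and for $p\neq2$ there is no BDG inequality to invoke. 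Your supporting geometric claim is also false: for $0\leq\theta<|J|$ the set $\widetilde I(J)$ covers $J_{-\theta}$, which overlaps $J$ itself, so $B^1(\widetilde I(J))$ does not ``sit to the left of $J$'' and is not even heuristically predictable.

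What is missing is the device the paper actually uses to make the term-by-term bound non-lossy: the Hayashi--Yoshida \emph{reduction procedure}. Before estimating, one replaces $(\Pi^1_N,\Pi^2_N)$ by new partitions obtained by merging all intervals of one grid that are contained in a single interval of the (shifted) other grid. Both $\widehat{U}^N_k(\theta)$ and $\widetilde{U}^N_k(\theta)$ are bilinear in the increments, hence invariant under this merging, and afterwards each interval meets at most $3$ intervals of the other partition, i.e. $\max_J\sum_I K(I,J_{-\theta})\leq3$ and $\max_I\sum_J K(I,J_{-\theta})\leq3$ (the paper's display \eqref{reduction}). With this bounded-overlap property, the Minkowski--Schwarz--BDG chain together with $\sqrt{|I||J|}\leq(|I|+|J|)/2$ gives $\sum_{I,J}\sqrt{|I||J|}\,K(I,J_{-\theta})\leq\tfrac32\bigl(\sum_I|I|+\sum_J|J|\bigr)\lesssim\tau_m$, and the target bound $C_p\tau_m^{(1+\gamma)p}$ follows with no orthogonality argument at all. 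So: right decomposition, right identification of the difficulty, but the key step you propose to overcome it would fail in this non-semimartingale model, and the standard combinatorial reduction that resolves it is absent from your proof.
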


\begin{proof}
By symmetry it is enough to consider the case of $\theta\geq0$.

First we apply the so-called reduction procedures used in \cite{HY2008,HY2011} to every realization of $(I)_{I\in\Pi^1_N}$ and $(J_{-\theta})_{J\in\Pi^2_N}$ (see also the proof of Lemma 2 from \cite{DY2011}). We define a new partition $\tilde{\Pi}^1_N$ as follows: $I\in\tilde{\Pi}^1_N$ if and only if either $I\in\Pi^1_N$ and it has non-empty intersection with two distinct intervals from $\Pi^2_N$ or there is $J\in\Pi^2_N$ such that $I$ is the union of all intervals from $\Pi^1_N$ included in $J$. We also define a new partition $\tilde{\Pi}^2_N$ as follows: $J\in\tilde{\Pi}^2_N$ if and only if either $J\in\Pi^2_N$ and $J_{-\theta}$ has non-empty intersection with two distinct intervals from $\Pi^1_N$ or there is $I\in\Pi^1_N$ such that $J$ is the union of all intervals from $J'\in\Pi^2_N$ such that $J'_{-\theta}$ is included in $I$. Due to bilinearity both $\widehat{U}^N_k(\theta)$ and $\widetilde{U}^N_k(\theta)$ are invariant under this procedure. $r_N$ is also unchanged by this application because of its definition. Moreover, by construction we have
\begin{equation*}
\max_{J\in\tilde{\Pi}^2_N}\sum_{I\in\tilde{\Pi}^1_N}K(I,J_{-\theta})\leq3,\qquad
\max_{I\in\tilde{\Pi}^1_N}\sum_{J\in\tilde{\Pi}^2_N}K(I,J_{-\theta})\leq3.
\end{equation*}
Consequently, for the proof we may replace $(\Pi^1_N,\Pi^2_N)$ by $(\tilde{\Pi}^1_N,\tilde{\Pi}^2_N)$. This allows us to assume that
\begin{equation}\label{reduction}
\max_{J\in\Pi^2_N}\sum_{I\in\Pi^1_N}K(I,J_{-\theta})\leq3,\qquad
\max_{I\in\Pi^1_N}\sum_{J\in\Pi^2_N}K(I,J_{-\theta})\leq3.
\end{equation}
throughout the proof without loss of generality.

We turn to the main body of the proof. We decompose the target quantity as
\begin{align*}
&\widehat{U}^N_k(\theta)-c^N_k(\theta)\widetilde{U}^N_k(\theta)\\
&=\sum_{I,J:\underline{I}\in I_m(k)}\left\{\int_I(\sigma^1_s-\sigma^1_{k\tau_m})dB^1_sX^2(J)+\sigma^1_{k\tau_m}B^1(I)\int_J(\sigma^2_s-\sigma^2_{(k\tau_m+\theta-r_N)_+})dB^2_s\right\}K(I,J_{-\theta})\\
&=:\mathbf{A}_N+\mathbf{B}_N.
\end{align*}
Let us consider $\mathbf{A}_N$. For every $p\geq1$, the Minkovski and Schwarz inequalities yield
\begin{align*}
\left\|\mathbf{A}_N\right\|_{p,\Pi}
&\leq\sum_{I,J:\underline{I}\in I_m(k)}\left\|\int_I(\sigma^1_s-\sigma^1_{k\tau_m})dB^1_sX^2(J)\right\|_{p,\Pi}K(I,J_{-\theta})\\
&\leq\sum_{I,J:\underline{I}\in I_m(k)}\left\|\int_I(\sigma^1_s-\sigma^1_{k\tau_m})dB^1_s\right\|_{2p,\Pi}\left\|X^2(J)\right\|_{2p,\Pi}K(I,J_{-\theta}).
\end{align*}
Therefore, Proposition 4.2 in \cite{BY1982} and \eqref{eq:sigma-bound} imply that
\begin{align*}
\left\|\mathbf{A}_N\right\|_{p,\Pi}
&\lesssim p\sum_{I,J:\underline{I}\in I_m(k)}\left\|\sup_{k\tau_m \leq s\leq (k+1)\tau_m+r_N}|\sigma^1_s-\sigma^1_{k\tau_m}|\right\|_{2p,\Pi}\sqrt{|I||J|}K(I,J_{-\theta})\\
&\lesssim p(\tau_m+r_N)^\gamma\sum_{I,J:\underline{I}\in I_m(k)}(|I|+|J|)K(I,J_{-\theta}).
\end{align*}
Thus we obtain $\left\|\mathbf{A}_N\right\|_{p,\Pi}\lesssim p(\tau_m+r_N)^{1+\gamma}$ by \eqref{reduction}. Hence, we conclude $\left\|\mathbf{A}_N\right\|_{\psi_1,\Pi}\lesssim(\tau_m+r_N)^{1+\gamma}$ by Proposition 2.7.1 in \cite{Vershynin2018}. 
By symmetry we also have $\left\|\mathbf{B}_N\right\|_{\psi_1,\Pi}\lesssim(\tau_m+r_N)^{1+\gamma}$. This completes the proof.
\end{proof}

\begin{lemma}\label{lemma:psi-bound}
For any $I\in\Pi^1_N$ and $J\in\Pi^2_N$,
\[
|E^\Pi[B^1(I)B^2(J)]|\leq\sum_{j=1}^{N+1}2^{N-j+1}(|I|\wedge|J|)\int_{\ul{I}-\ol{J}+\theta_j}^{\ol{I}-\ul{J}+\theta_j}|\psi^{LP}(2^{N-j+1}t)|dt.
\]
\end{lemma}

\begin{proof}
Since $B$ has the cross-spectral density $f_N$, we have
\begin{align*}
E^\Pi[B^1(I)B^2(J)]
&=\frac{1}{2\pi}\int_{-\infty}^\infty(\mathcal{F}1_I)(\lambda)(\mathcal{F}1_J)(-\lambda)f_N(\lambda)d\lambda\\
&=\frac{1}{2\pi}\sum_{j=1}^{N+1}R_j\int_{-\infty}^\infty(\mathcal{F}1_I)(\lambda)(\mathcal{F}1_J)(-\lambda)e^{-\sqrt{-1}\theta_j\lambda}1_{\Lambda_{N-j+1}}(\lambda)d\lambda\\
&=\frac{1}{2\pi}\sum_{j=1}^{N+1}R_j\int_{-\infty}^\infty(\mathcal{F}[1_I*1_{-J}])(\lambda)e^{-\sqrt{-1}\theta_j\lambda}1_{\Lambda_0}(\lambda/2^{N-j+1})d\lambda,
\end{align*}
where $*$ denotes convolution. Thus, Parseval's identity yields
\begin{align*}
E^\Pi[B^1(I)B^2(J)]
&=\sum_{j=1}^{N+1}R_j\cdot2^{N-j+1}\int_{-\infty}^\infty(1_I*1_{-J})(t-\theta_j)\psi^{LP}(2^{N-j+1}t)dt\\
&=\sum_{j=1}^{N+1}R_j\cdot2^{N-j+1}\int_{-\infty}^\infty|I\cap(J-\theta_j+t)|\psi^{LP}(2^{N-j+1}t)dt.
\end{align*}
Since $|I\cap(J-\theta_j+t)|\leq|I|\wedge|J|$ and $I\cap(J-\theta_j+t)\neq\emptyset$ only if $\ul{I}-\ol{J}+\theta_j\leq t\leq\ol{I}-\ul{J}+\theta_j$, we obtain the desired result. 
\end{proof}

%Let us take a number $\xi\in(\frac{\beta+4}{5},1)$ and set $\phi_N=\tau_N^{(5\xi-4+\beta)/2}|\log\tau_N|$. 
% \section{lemma:exp}
\begin{lemma}\label{lemma:exp}
\if0
\[
u_N=(\tau_N^{-2+3\xi}|\log\tau_N|^2)^{-1}\wedge(\tau_N^{-2+\frac{5}{2}\xi}\sqrt{\tau_m}|\log\tau_N|)^{-1}
\]
\fi
Under the assumptions of Theorem \ref{theorem:main}, 
there is a universal constant $C$ such that
\begin{align*}
\left\|\widetilde{U}^N_k(\theta)-\overline{U}^N_k(\theta)\right\|_{\psi_1,\Pi}
\leq C\phi_N
\end{align*}
for any $N\in\mathbb{N}$, $k\in\mathbb{Z}_+$ and $\theta\in(-\delta,\delta)$, where
\[
\phi_N:=\sqrt{r_N(\tau_m+r_N)\left(1+r_N\tau_N^{-1}|\log\tau_N|\right)}.
\]
\end{lemma}

\begin{proof}
Again, by symmetry it suffices to consider the case of $\theta\geq0$. Moreover, as in the proof of Lemma \ref{lemma:block}, we may assume \eqref{reduction} without loss of generality. 

Set $
Z_N:=(B^1(I))_{I\in\Pi^1_N:\underline{I}\in I_m(k)},B^2(J))_{J\in\Pi^2_N:\underline{J}\in \tilde{I}_m(k)})^\top,
$ 
and
\[
A_N=\left(
\begin{array}{cc}
0  &   K_N   \\
K_N^\top  &   0
\end{array}
\right),\qquad
K_N=(K(I,J_{-\theta})/2)_{(I,J)\in\Pi^1_N\times\Pi^2_N:\underline{I}\in I_m(k),\underline{J}\in \tilde{I}_m(k)},
\]
where $\tilde{I}_m(k)=I_m(k)+\theta-r_N$. 
Then we have $\widetilde{U}^N_k(\theta)-\overline{U}^N_k(\theta)=Z_N^\top A_NZ_N-E^\Pi[Z_N^\top A_NZ_N]$. Therefore, by Lemma 13 in \cite{HK2016} there is a universal constant $c_1>0$ such that
\[
\|\widetilde{U}^N_k(\theta)-\overline{U}^N_k(\theta)\|_{\psi_1,\Pi}\leq c_1\sqrt{E^\Pi[|\widetilde{U}^N_k(\theta)-\overline{U}^N_k(\theta)|^2]}.
\]
Let $\Sigma_N$ be the $\Pi$-conditional covariance matrix of $Z_N$ and set $C_{N}=\Sigma_{N}^{1/2}A_N\Sigma_{N}^{1/2}$. Then we have $E^\Pi[|\widetilde{U}^N_k(\theta)-\overline{U}^N_k(\theta)|^2]=2\|C_N\|_F^2$, where $\|\cdot\|_F$ denotes the Frobenius norm of matrices. Hence the proof is completed once we show that 
\begin{equation}\label{matrix}
\|C_{N}\|_F\leq c_2\phi_N^2
\end{equation}
for some universal constant $c_2>0$. 
By Theorem 5.6.9 in \cite{HJ2013} and \eqref{reduction}, we have $\|A_N\|_\mathrm{sp}\leq\frac{3}{2}$. Therefore, inequalities (ii)--(iii) in Appendix II of \cite{Davies1973} yield
\begin{align*}
\|C_{N}\|_F^2&\leq\frac{9}{4}\|\Sigma_{N}\|_F^2=\frac{9}{4}\sum_{I\in\Pi^1_N:\underline{I}\in I_m(k)}E^\Pi\left[B^1(I)^2\right]^2+\frac{9}{4}\sum_{J\in\Pi^2_N:\underline{J}\in \tilde{I}_m(k)}E^\Pi\left[B^2(J)^2\right]^2\\
&\hphantom{\leq9\|\Sigma_{N}\|_F^2=}+\frac{9}{2}\sum_{(I,J)\in\Pi^1_N\times\Pi^2_N:\underline{I}\in I_m(k),\underline{J}\in \tilde{I}_m(k)}E^\Pi\left[B^1(I)B^2(J)\right]^2\\
&\leq\frac{9}{2}\left\{r_N(\tau_m+r_N)+\sum_{(I,J)\in\Pi^1_N\times\Pi^2_N:\underline{I}\in I_m(k),\underline{J}\in \tilde{I}_m(k)}E^\Pi\left[B^1(I)B^2(J)\right]^2\right\}.
\end{align*}
By Lemma \ref{lemma:psi-bound} and using the Schwarz inequality twice, we obtain
\begin{align*}
\left|E^\Pi\left[B^1(I)B^2(J)\right]\right|^2
&\leq(N+1)\sum_{j=1}^{N+1}4^{N-j+1}(|I|\wedge|J|)^2(|I|+|J|)\int_{\ul{I}-\ol{J}+\theta_j}^{\ol{I}-\ul{J}+\theta_j}\psi^{LP}(2^{N-j+1}t)^2dt\\
&\leq2(N+1)r_N^2\sum_{j=1}^{N+1}4^{N-j+1}|I|\int_{\ul{I}-\ol{J}+\theta_j}^{\ol{I}-\ul{J}+\theta_j}\psi^{LP}(2^{N-j+1}t)^2dt.
\end{align*}
Hence we have
\begin{align*}
\sum_{(I,J):\underline{I}\in I_m(k),\underline{J}\in \tilde{I}_m(k)}\left|E^\Pi\left[B^1(I)B^2(J)\right]\right|^2
&\leq2(N+1)r_N^2(\tau_m+r_N)\sum_{j=1}^{N+1}2^{N-j+1}\\
&\leq8(N+1)r_N^2(\tau_m+r_N)\tau_N^{-1}.
\end{align*}
This completes the proof of \eqref{matrix}. 
\end{proof}

% \section{lemma:ubar}
\begin{lemma}\label{lemma:ubar}
Under the assumptions of Theorem \ref{theorem:main}, 
we have $\left|\overline{U}^N_k(\theta)\right|\leq6(\tau_m+r_N)$ for any $N\in\mathbb{N}$, $k\in\mathbb{Z}_+$ and $\theta\in(-\delta,\delta)$.
\end{lemma}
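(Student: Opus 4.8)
The plan is to bound $|\overline U^N_k(\theta)|$ by the sum of the absolute conditional cross-covariances $|E^\Pi[B^1(I)B^2(J)]|$, control each of these by the elementary Cauchy--Schwarz estimate $\sqrt{|I||J|}$, and use the Hayashi--Yoshida reduction to guarantee that every interval overlaps only boundedly many intervals of the opposite grid, so that the termwise bound can be summed without loss.

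By symmetry it suffices to treat $\theta\ge0$. First I would apply verbatim the reduction procedure used at the beginning of the proof of Lemma \ref{lemma:block}: replacing $(\Pi^1_N,\Pi^2_N)$ by the reduced partitions leaves $\overline U^N_k(\theta)=E^\Pi[\widetilde U^N_k(\theta)]$ unchanged (by bilinearity, and because the deterministic regrouping commutes with $E^\Pi$), does not increase $r_N$, and allows me to assume \eqref{reduction}, i.e.\ $\max_{J}\sum_{I}K(I,J_{-\theta})\le3$ and $\max_{I}\sum_{J}K(I,J_{-\theta})\le3$; the $O(r_N)$ discrepancy caused by intervals straddling the block boundary $I_m(k)$ is absorbed into the $r_N$ terms below. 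Then, by the triangle inequality together with the conditional Cauchy--Schwarz inequality and the fact that each $B^\nu$ is a standard Brownian motion independent of the sampling times (so that $E^\Pi[B^\nu(H)^2]=|H|$),
\[
|\overline U^N_k(\theta)|\le\sum_{I,J:\underline I\in I_m(k)}\left|E^\Pi[B^1(I)B^2(J)]\right|K(I,J_{-\theta})\le\sum_{I,J:\underline I\in I_m(k)}\sqrt{|I||J|}\,K(I,J_{-\theta}).
\]
Using $\sqrt{|I||J|}\le\tfrac12(|I|+|J|)$ and summing out one index at a time with \eqref{reduction} gives
\[
|\overline U^N_k(\theta)|\le\frac32\sum_{I:\underline I\in I_m(k)}|I|+\frac32\sum_{J\in\mathcal J}|J|,
\]
where $\mathcal J$ denotes the set of $J$ meeting some admissible $I$. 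Finally, the intervals $I$ with $\underline I\in I_m(k)$ tile $I_m(k)$ with an overhang of at most $r_N$, so $\sum_{I}|I|\le\tau_m+r_N$, while the disjoint translates $J_{-\theta}$, $J\in\mathcal J$, cover a region exceeding $I_m(k)$ by at most $3r_N$ in total, so $\sum_{J\in\mathcal J}|J|\le\tau_m+3r_N$. Substituting yields $|\overline U^N_k(\theta)|\le\tfrac32(\tau_m+r_N)+\tfrac32(\tau_m+3r_N)=3\tau_m+6r_N\le6(\tau_m+r_N)$, as claimed.

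The step I expect to be the real obstacle is the necessity of the reduction: a naive termwise Cauchy--Schwarz applied to the raw partitions diverges. Indeed, a single long interval can intersect $\sim r_N/\tau_N$ short intervals of the other grid, and the corresponding contribution $\sum\sqrt{|I||J|}$ then grows like $(\tau_m/\tau_N)^{1/2}\tau_m$, which is not $O(\tau_m+r_N)$ since $r_N/\tau_N\to\infty$. Capping every overlap count at $3$ via the reduction is exactly what rescues the elementary estimate. Equivalently, one may argue by the trace identity $\overline U^N_k(\theta)=\trace[C_N]=\trace[\Sigma_N^{1/2}A_N\Sigma_N^{1/2}]$ and the bound $|\trace[\Sigma_N^{1/2}A_N\Sigma_N^{1/2}]|\le\|A_N\|_{\mathrm{sp}}\trace[\Sigma_N]\le\frac32\trace[\Sigma_N]$, where $\trace[\Sigma_N]=\sum_I|I|+\sum_J|J|$, reusing the spectral-norm estimate $\|A_N\|_{\mathrm{sp}}\le\frac32$ already established in the proof of Lemma \ref{lemma:exp}; this makes the role of the reduction (through $\|A_N\|_{\mathrm{sp}}\le\frac32$) completely transparent.
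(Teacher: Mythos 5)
Your proof is correct and follows essentially the same route as the paper's: the paper likewise invokes the reduction to assume \eqref{reduction}, bounds each term via $\left|E^\Pi\left[B^1(I)B^2(J)\right]\right|\leq E^\Pi\left[B^1(I)^2\right]+E^\Pi\left[B^2(J)^2\right]=|I|+|J|$, and then sums using the overlap count $3$ and the total-length bound on each block, arriving at $3\cdot2(\tau_m+r_N)$. Your Cauchy--Schwarz/AM--GM variant with the factor $\tfrac12$ (and the trace-identity remark) only sharpens the constants and does not change the argument.
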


\begin{proof}
Similarly to the above proofs, we may assume $\theta\geq0$ and that \eqref{reduction} holds true without loss of generality. Then we have
\begin{align*}
\left|\overline{U}^N_k(\theta)\right|
&=\left|\sum_{I,J:\underline{I}\in I_m(k),\underline{J}\in\tilde{I}_m(k)}E^\Pi\left[B^1(I)B^2(J)\right]K(I,J_{-\theta})\right|\\
&\leq\sum_{I,J:\underline{I}\in I_m(k),\underline{J}\in\tilde{I}_m(k)}\left\{E^\Pi\left[B^1(I)^2\right]+E\left[B^2(J)^2\right]\right\}K(I,J_{-\theta})\\
&\leq3\left\{\sum_{I:\underline{I}\in I_m(k)}|I|+\sum_{I:\underline{J}\in \tilde{I}_m(k)}|J|\right\}
\leq3\cdot2(\tau_m+r_N).
\end{align*}
This completes the proof.
\end{proof}

% \section{lemma:main}
\begin{lemma}\label{lemma:main}
Under the assumptions of Theorem \ref{theorem:main}, 
we have
\begin{align*}
\max_{\theta\in\mathcal{G}^N}\left|\widehat{\rho}_{\tcr{N-j+1}}(\theta)-\tau_m\sum_{k=0}^{\lfloor T\tau_m^{-1}\rfloor-1}c^N_k(\theta)\int_{-\pi}^\pi D(\lambda)H_{j,L}(\lambda)e^{\sqrt{-1}\lambda\theta\tau_N^{-1}}f_N(\lambda/\tau_N)d\lambda\right|\to^p0
\end{align*}
as $N\to\infty$.
\end{lemma}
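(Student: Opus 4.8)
The plan is to prove the statement through a chain of approximations that reduces $\widehat\rho_{(j)}(\theta)$ to the target expression blockwise and then sums, while showing that every error term is $o_p(1)$ \emph{uniformly} over the finite grid $\mathcal{G}^N$. First I would decompose $\widehat U^N(\theta)=\sum_{k=0}^{M_N-1}\widehat U^N_k(\theta)$ with $M_N=\lceil T\tau_m^{-1}\rceil$, so that
\[
\widehat\rho_{(j)}(\theta)=\sum_{l=-L_j+1}^{L_j-1}\Psi_j(l)\sum_{k=0}^{M_N-1}\widehat U^N_k(\theta-l\tau_N).
\]
Then I would run three successive replacements inside the double sum: (i) replace $\widehat U^N_k(\theta-l\tau_N)$ by $c^N_k(\theta-l\tau_N)\widetilde U^N_k(\theta-l\tau_N)$, controlled by Lemma \ref{lemma:block}; (ii) replace $\widetilde U^N_k$ by its conditional mean $\overline U^N_k$, controlled by the exponential-moment bound of Lemma \ref{lemma:exp}; and (iii) replace $c^N_k(\theta-l\tau_N)$ by $c^N_k(\theta)$, controlled by the $\gamma$-H\"older continuity of $\sigma^\nu$ (Assumption \ref{volatility}) together with the crude bound $|\overline U^N_k|\lesssim\tau_m$ from Lemma \ref{lemma:ubar} (here $r_N=O_p(\tau_N^\xi)=o(\tau_m)$ since $\xi>(\beta+4)/5>\beta$).

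After these replacements the leading term is $\sum_l\Psi_j(l)\sum_k c^N_k(\theta)\overline U^N_k(\theta-l\tau_N)$, and the identification is essentially algebraic. I would insert the Fourier representation $E^\Pi[B^1(I)B^2(J)]=\frac{1}{2\pi\tau_N}\int(\mathcal{F}1_I)(\tau_N^{-1}\lambda)\overline{(\mathcal{F}1_J)(\tau_N^{-1}\lambda)}f_N(\tau_N^{-1}\lambda)\,d\lambda$ from the proof of Lemma \ref{lemma:exp} and apply the shift identity $(\mathcal{F}1_J)(-\tau_N^{-1}\lambda)=e^{\sqrt{-1}\tau_N^{-1}\lambda\vartheta}(\mathcal{F}1_{J_{-\vartheta}})(-\tau_N^{-1}\lambda)$ to obtain, summing over the block,
\[
\overline U^N_k(\vartheta)=\tau_m\int_{-\pi}^{\pi}D^N_k(\lambda,\vartheta)e^{\sqrt{-1}\tau_N^{-1}\lambda\vartheta}f_N(\tau_N^{-1}\lambda)\,d\lambda,
\]
where the domain collapses from $\mathbb{R}$ to $[-\pi,\pi]$ because the support of $\lambda\mapsto f_N(\tau_N^{-1}\lambda)$ is $\bigcup_{j'=1}^{N+1}\Lambda_{-j'}\subset[-\pi,\pi]$. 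Writing $e^{\sqrt{-1}\tau_N^{-1}\lambda(\theta-l\tau_N)}=e^{\sqrt{-1}\tau_N^{-1}\lambda\theta}e^{-\sqrt{-1}l\lambda}$ and then replacing $D^N_k(\lambda,\theta-l\tau_N)$ by $D(\lambda)$ lets the $l$-summation collapse into the transfer function, $\sum_l\Psi_j(l)e^{-\sqrt{-1}l\lambda}=H_{j,L}(\lambda)$, which reproduces exactly the claimed integrand. The error of this last $D^N_k\to D$ replacement is dominated, using $|f_N|\le1$ and $|c^N_k|\le K^2$, by $K^2\sum_l|\Psi_j(l)|\,\tau_m\sum_k\int_{-\pi}^{\pi}|D^N_k(\lambda,\theta-l\tau_N)-D(\lambda)|\,d\lambda$, which is governed by Assumption \ref{sampling}(ii).

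The remaining work, and the main obstacle, is to upgrade each per-$\theta$ estimate to a bound on $\max_{\theta\in\mathcal{G}^N}$. My tool throughout would be a union bound $E[\max_\theta|\cdot|^p]\le\sum_\theta E[|\cdot|^p]$, exploiting that $|\mathcal{G}^N|\lesssim\Gamma_N\lesssim\tau_N^{-1}$ is only polynomial in $\tau_N^{-1}$ whereas each per-$\theta$ error is a genuine positive power of $\tau_N$. For step (i) this is immediate since Lemma \ref{lemma:block} holds for every $p>1$ and $M_N L_j\tau_m^{1+\gamma}=L_j\tau_m^\gamma\lesssim\tau_N^{\beta\gamma/2}$. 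The delicate case is step (ii): from Lemma \ref{lemma:exp} I would extract the polynomial moments $\|\widetilde U^N_k-\overline U^N_k\|_p\lesssim p/u_N$, so that summing the $M_N\sim\tau_N^{-\beta}$ blocks even crudely by Minkowski (without any cross-block cancellation) yields an error of order $M_N L_j/u_N$, and $M_N/u_N=\tau_N^{(5\xi-4-\beta)/2}|\log\tau_N|\to0$ precisely because $\xi>(\beta+4)/5$ — this is exactly why $\xi$ was chosen in that range. Finally, for the $D^N_k\to D$ step the union bound over the grid is afforded by $\alpha>1$: by Jensen in $(\lambda,k)$ and Assumption \ref{sampling}(ii), $\sum_{\theta}\sum_l E[(\tau_m\sum_k\int_{-\pi}^\pi|D^N_k-D|\,d\lambda)^Q]\lesssim|\mathcal{G}^N|\,L_j\tau_N^{\alpha}\lesssim L_j\tau_N^{\alpha-1}\to0$. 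Assembling the four uniform bounds then gives the asserted convergence in probability.
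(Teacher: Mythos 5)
Your proposal is correct and follows essentially the same route as the paper's proof: the same blockwise decomposition and the same chain of replacements (replace $\widehat U^N_k$ by $c^N_k\widetilde U^N_k$ via Lemma \ref{lemma:block}, then $\widetilde U^N_k$ by $\overline U^N_k$ via Lemma \ref{lemma:exp}, then shift the volatility argument via H\"older continuity plus Lemma \ref{lemma:ubar}, then identify the limit through $\overline U^N_k(\vartheta)=\tau_m\int_{-\pi}^\pi D^N_k(\lambda,\vartheta)e^{\sqrt{-1}\lambda\vartheta\tau_N^{-1}}f_N(\lambda/\tau_N)\,d\lambda$ and Assumption \ref{sampling}(ii)), with the same polynomial-grid union-bound strategy throughout. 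The only cosmetic difference is in the second step, where you convert the exponential bound of Lemma \ref{lemma:exp} into polynomial moments $\|\widetilde U^N_k-\overline U^N_k\|_p\lesssim p/u_N$ and sum by Minkowski, whereas the paper applies the exponential Markov inequality directly; both hinge on the same choice of $u_N$ and the condition $\xi>(\beta+4)/5$.
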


\begin{proof}
We decompose the target quantity as
\begin{align*}
&\widehat{\rho}_{\tcr{N-j+1}}(\theta)-\tau_m\sum_{k=0}^{\lfloor T\tau_m^{-1}\rfloor-1}c^N_k(\theta)\int_{-\pi}^\pi D(\lambda)H_{j,L}(\lambda)e^{\sqrt{-1}\lambda\theta\tau_N^{-1}}f_N(\lambda/\tau_N)d\lambda\\
&=\left(\widehat{\rho}_{\tcr{N-j+1}}(\theta)-\sum_{l=-L_j-1}^{L_j-1}\Psi_j(l)\sum_{k=0}^{\lfloor T\tau_m^{-1}\rfloor-1}c^N_k(\theta-l\tau_N)\widetilde{U}^N_k(\theta-l\tau_N)\right)\\
&\qquad+\sum_{l=-L_j+1}^{L_j-1}\Psi_j(l)\sum_{k=0}^{\lfloor T\tau_m^{-1}\rfloor-1}c^N_k(\theta-l\tau_N)\left\{\widetilde{U}^N_k(\theta-l\tau_N)-\overline{U}^N_k(\theta-l\tau_N)\right\}\\
&\qquad+\sum_{l=-L_j+1}^{L_j-1}\Psi_j(l)\sum_{k=0}^{\lfloor T\tau_m^{-1}\rfloor-1}\left\{c^N_k(\theta-l\tau_N)-c^N_k(\theta)\right\}\overline{U}^N_k(\theta-l\tau_N)\\
&\qquad+\sum_{k=0}^{\lfloor T\tau_m^{-1}\rfloor-1}c^N_k(\theta)\left(\sum_{l=-L_j+1}^{L_j-1}\Psi_j(l)\overline{U}^N_k(\theta-l\tau_N)-\tau_m\int_{-\pi}^\pi D(\lambda)H_{j,L}(\lambda)e^{\sqrt{-1}\lambda\theta\tau_N^{-1}}f_N(\lambda/\tau_N)d\lambda\right)\\
&=:\mathbf{I}_N(\theta)+\mathbf{II}_N(\theta)+\mathbf{III}_N(\theta)+\mathbf{IV}_N(\theta).
\end{align*}
We have
\begin{align*}
\left|\mathbf{I}_N(\theta)\right|\leq
\sum_{l=-L_j-1}^{L_j-1}|\Psi_j(l)|\left\{\sum_{k=0}^{\lfloor T\tau_m^{-1}\rfloor-1}\left|\widehat{U}^N_k(\theta-l\tau_N)-c^N_k(\theta-l\tau_N)\widetilde{U}^N_k(\theta-l\tau_N)\right|+\left|\widehat{U}^N_{\lfloor T\tau_m^{-1}\rfloor}(\theta-l\tau_N)\right|\right\}.
\end{align*}
Hence, we obtain by Lemmas \ref{lemma:block} and \ref{lemma:exp}--\ref{lemma:ubar}
\begin{align*}
\left\|\mathbf{I}_N(\theta)\right\|_{\psi_1,\Pi}
&\lesssim \{(\tau_m+r_N)^{\gamma}+\phi_N\}\sum_{l=-L_j-1}^{L_j-1}|\Psi_j(l)|.
\end{align*}
So Lemma 2.2.2 in \cite{VW1996} yields
\[
\left\|\max_{\theta\in\mathcal{G}_N}\left|\mathbf{I}_N(\theta)\right|\right\|_{\psi_1,\Pi}
\lesssim |\log\tau_N|\{(\tau_m+r_N)^{\gamma}+\phi_N\}\sum_{l=-L_j-1}^{L_j-1}|\Psi_j(l)|.
\] 
Now, we have by \eqref{acw-bound} and Corollary \ref{coro:decay}
\begin{equation}\label{eq:acw-l1}
\sum_{l=-L_j-1}^{L_j-1}|\Psi_j(l)|=O(\log L)
\end{equation}
as $L\to\infty$. Hence we conclude $\max_{\theta\in\mathcal{G}^N}\left|\mathbf{I}_N(\theta)\right|\to^p0$ by assumption.

Next, Lemma \ref{lemma:exp} yields
\[
\max_{\theta\in\mathcal{G}_N}\left\|\mathbf{II}_N(\theta)\right\|_{\psi_1,\Pi}
\lesssim \tau_m^{-1}\phi_N\sum_{l=-L_j-1}^{L_j-1}|\Psi_j(l)|,
\]
so we obtain by Lemma 2.2.2 in \cite{VW1996} 
\[
\left\|\max_{\theta\in\mathcal{G}_N}\left|\mathbf{II}_N(\theta)\right|\right\|_{\psi_1,\Pi}
\lesssim |\log\tau_N|\tau_m^{-1}\phi_N\sum_{l=-L_j-1}^{L_j-1}|\Psi_j(l)|.
%=|\log\tau_N|^2L\tau_N^{(5\xi-4-\beta)/2}.
\]
Since $\tau_m^{-1}\phi_N=O_p(\tau_N^a)$ for some $a>0$ and $L=O(\tau_N^{-1})$ by assumption, we conclude $\max_{\theta\in\mathcal{G}^N}\left|\mathbf{II}_N(\theta)\right|\to^p0$ from \eqref{eq:acw-l1}.

Now we prove $\max_{\theta\in\mathcal{G}^N}\left|\mathbf{III}_N(\theta)\right|\to^p0$. We have
\begin{align*}
&\max_{\theta\in\mathcal{G}^N}\left|\mathbf{III}_N(\theta)\right|\\
&\leq\max_{\theta\in\mathcal{G}^N}\max_{l\in\mathbb{Z}:|l|< L_j}\max_{k=0,1,\dots,\lfloor T\tau_m^{-1}\rfloor-1}\left|c^N_k(\theta-l\tau_N)-c^N_k(\theta)\right|\left(\sum_{l=-L_j-1}^{L_j-1}|\Psi_j(l)|\right)\left(\sum_{k=0}^{\lfloor T\tau_m^{-1}\rfloor-1}\max_{\theta\in\mathcal{G}^N}\left|\overline{U}^N_k(\theta)\right|\right).
\end{align*}
By the H\"older continuity of $\sigma^1,\sigma^2$ and \eqref{eq:acw-l1}, we have
\[
\max_{\theta\in\mathcal{G}^N}\max_{l\in\mathbb{Z}:|l|< L_j}\max_{k=0,1,\dots,\lfloor T\tau_m^{-1}\rfloor-1}\left|c^N_k(\theta-l\tau_N)-c^N_k(\theta)\right|\left(\sum_{l=-L_j-1}^{L_j-1}|\Psi_j(l)|\right)=O_p((L\tau_N)^\gamma\log L)
\]
as $N\to\infty$, while Lemma \ref{lemma:ubar} yields $\sum_{k=0}^{\lfloor T\tau_m^{-1}\rfloor-1}\max_{\theta\in\mathcal{G}^N}\left|\overline{U}^N_k(\theta)\right|=O_p(1)$. Hence we obtain the desired result by assumption.

Finally we prove $\max_{\theta\in\mathcal{G}^N}\left|\mathbf{IV}_N(\theta)\right|\to^p0$. Noting that
\begin{align*}
\int_{-\pi}^\pi D(\lambda)H_{j,L}(\lambda)e^{\sqrt{-1}\lambda\theta\tau_N^{-1}}f_N(\lambda/\tau_N)d\lambda
=\sum_{l=-L_j-1}^{L_j-1}\Psi_j(l)\int_{-\pi}^\pi D(\lambda)e^{\sqrt{-1}\lambda(\theta-l\tau_N)\tau_N^{-1}}f_N(\lambda/\tau_N)d\lambda,
\end{align*}
we have for any $\varepsilon>0$
\begin{align*}
&P\left(\max_{\theta\in\mathcal{G}^N}\left|\mathbf{IV}_N(\theta)\right|>\varepsilon\right)\\
&\leq\sum_{\theta\in\mathcal{G}^N}P\left(\tau_m\sum_{l=-L_j+1}^{L_j-1}|\Psi_j(l)|\sum_{k=0}^{\lfloor T\tau_m^{-1}\rfloor-1}\left|\tau_m^{-1}\overline{U}^N_k(\theta-l\tau_N)-\int_{-\pi}^\pi D(\lambda)e^{\sqrt{-1}\lambda(\theta-l\tau_N)\tau_N^{-1}}f_N(\lambda/\tau_N)d\lambda\right|>\frac{\varepsilon}{K^2}\right)
\end{align*}
by \eqref{eq:sigma-bound}. Since we have
\begin{align*}
\overline{U}^N_k(\theta-l\tau_N)
%&=\int_{-\infty}^\infty\left\{\tau_N^{-1}\sum_{I,J:\underline{I}\in I_m(k)}(\mathcal{F}1_I)(\lambda/\tau_N)(\mathcal{F}1_J)(-\lambda/\tau_N)K(I,J_{l\tau_N-\theta})\right\}f_N(\lambda/\tau_N)d\lambda\\
&=\tau_m\int_{-\pi}^\pi D^N_k(\lambda,\theta-l\tau_N)e^{\sqrt{-1}\lambda(\theta-l\tau_N)\tau_N^{-1}}f_N(\lambda/\tau_N)d\lambda,
\end{align*}
it holds that
\begin{align*}
&E\left[\left\{\tau_m\sum_{l=-L_j+1}^{L_j-1}|\Psi_j(l)|\sum_{k=0}^{\lfloor T\tau_m^{-1}\rfloor-1}\left|\tau_m^{-1}\overline{U}^N_k(\theta-l\tau_N)-\int_{-\pi}^\pi D(\lambda)e^{\sqrt{-1}\lambda(\theta-l\tau_N)}f_N(\lambda/\tau_N)d\lambda\right|\right\}^Q\right]\\
&\leq\left(2\pi T\sum_{l=-L_j+1}^{L_j-1}|\Psi_j(l)|\right)^{Q-1}\tau_m\sum_{k=0}^{\lfloor T\tau_m^{-1}\rfloor-1}\int_{-\pi}^\pi E\left[\left|D^N_k(\lambda,\theta-l\tau_N)-D(\lambda)\right|^Q\right]d\lambda
\end{align*}
by the Jensen inequality. Therefore, by the Markov inequality we obtain
\begin{align*}
&P\left(\max_{\theta\in\mathcal{G}^N}\left|\mathbf{IV}_N(\theta)\right|>\varepsilon\right)\\
&\leq\left(\frac{K^2}{\varepsilon}\right)^Q\tau_N^{-1}\left(2\pi T\sum_{l=-L_j+1}^{L_j-1}|\Psi_j(l)|\right)^{Q-1}\max_{\theta\in\mathcal{G}^N}\tau_m\sum_{k=0}^{\lfloor T\tau_m^{-1}\rfloor-1}\int_{-\pi}^\pi E\left[\left|D^N_k(\lambda,\theta)-D(\lambda)\right|^Q\right]d\lambda.
\end{align*}
Consequently, Assumption \ref{sampling} and \eqref{eq:acw-l1} imply the desired result (note that $L=O(\tau_N^{-1})$ by assumption). This completes the proof.
\end{proof}

% \section{proof of the main body}

\begin{proof}[\upshape{\textbf{Proof of Theorem \ref{theorem:main}}}]

(a) From Lemma \ref{lemma:main} it is enough to prove
\[
\max_{\theta\in\mathcal{G}^N:|\theta-\theta_j|\geq v_N}\left|\tau_m\sum_{k=0}^{\lfloor T\tau_m^{-1}\rfloor-1}c^N_k(\theta)\int_{-\pi}^\pi D(\lambda)H_{j,L}(\lambda)e^{\sqrt{-1}\lambda\theta\tau_N^{-1}}f_N(\lambda/\tau_N)d\lambda\right|\to^p0
\]
as $N\to\infty$. The above equation follows once we show the following statements: If $\vartheta_N\in\mathcal{G}^N$ ($N=1,2,\dots$) satisfy $|\vartheta_N-\theta_j|\geq v_N$ for every $N$, then
\[
a_N:=\int_{-\pi}^\pi D(\lambda)H_{j,L}(\lambda)e^{\sqrt{-1}\lambda\vartheta_N\tau_N^{-1}}f_N(\lambda/\tau_N)d\lambda\to0
\]
as $N\to\infty$. To prove this statement, we decompose $a_N$ as
\[
a_N=\sum_{i=1}^{N+1}R_i\int_{\Lambda_{-i}} D(\lambda)H_{j,L}(\lambda)e^{\sqrt{-1}\lambda(\vartheta_N-\theta_i)\tau_N^{-1}}d\lambda
=:\sum_{i=1}^{N+1}a_N(i).
\]
Since $|a_N(i)|\leq 2^j\|D\|_\infty\pi/2^{i-1}$ by \eqref{eq:daub-bound} ($\|D\|_\infty$ denotes the essential supremum of $D$), it suffices to prove $a_N(i)\to0$ as $N\to\infty$ for any fixed $i$ due to the dominated convergence theorem. When $i\neq j$, this follows from \eqref{lai} and the bounded convergence theorem. In the meantime, using integration by parts, we obtain
\begin{align*}
|a_N(j)|&\leq\frac{2^{j+1}\|D\|_\infty}{\tau_N^{-1}|\vartheta_N-\theta_j|}
+\frac{1}{\tau_N^{-1}|\vartheta_N-\theta_j|}\int_{\Lambda_{-j}} (|D'(\lambda)H_{j,L}(\lambda)|+|D(\lambda)H_{j,L}'(\lambda)|)d\lambda\\
&\leq\frac{2^{j+1}\|D\|_\infty}{\tau_N^{-1}|\vartheta_N-\theta_j|}
+\frac{1}{\tau_N^{-1}|\vartheta_N-\theta_j|}\left(
2\pi\|D'\|_\infty
+\|D\|_\infty\int_{\Lambda_{-j}}|H_{j,L}'(\lambda)|d\lambda
\right).
\end{align*}
Hence we obtain $a_N(j)\to0$ as $N\to\infty$ by Corollary \ref{coro:deriv} and assumption. 

%This can be shown in an analogous manner to the proof of Lemma 8 in \cite{HK2016}. 

(b) From Lemma \ref{lemma:main} it suffices to prove
\[
\tau_m\sum_{k=0}^{\lfloor T\tau_m^{-1}\rfloor-1}c^N_k(\theta)\int_{-\pi}^\pi D(\lambda)H_{j,L}(\lambda)e^{\sqrt{-1}\lambda\theta\tau_N^{-1}}f_N(\lambda/\tau_N)d\lambda
\to^p2^j\Sigma_T(\theta_j)R_{j}\int_{\Lambda_{-j}}D(\lambda)\cos(b\lambda)d\lambda
\]
as $N\to\infty$. Using an analogous argument to the above, we can deduce this convergence from the dominated convergence theorem and \eqref{lai}. 
\end{proof}

\subsection{Proof of Theorem \ref{HRY}}

Noting that $\int_{\Lambda_{-j}}D(\lambda)\cos(b\lambda)d\lambda>0$ for any $b\in[-\frac{1}{2},\frac{1}{2}]$ by assumption, the theorem can be shown in an analogous manner to the proof of Theorem 2 in \cite{HK2016} (using Theorem 1 instead of Theorem 1(b) and Proposition 3 in \cite{HK2016}).\hfill$\Box$

\appendix

\section{Appendix: Fundamental properties of Daubechies' wavelet filter}\label{sec:appendix}

\setcounter{equation}{0}
\numberwithin{equation}{section}

This appendix collects a few key properties of Daubechies' wavelet filter used in the proofs of our main results. 
First, since $H_L(\lambda)+G_L(\lambda)=2$ for all $\lambda\in\mathbb{R}$ by Eq.(69d) in \cite{PW2000}, we have
\begin{equation}\label{eq:daub-bound}
H_L(\lambda)\vee G_L(\lambda)\leq2\qquad\text{for all }\lambda\in\mathbb{R}.
\end{equation}
Next, for every $j\in\mathbb{N}$, the filter $(h_{j,p})_{p=0}^{L_j-1}$ has unit energy: $\sum_{p=0}^{L_j-1}h_{j,p}^2=1$ (cf.~Section 4.6 of \cite{PW2000}). Hence, the Schwarz inequality yields
\begin{equation}\label{acw-bound}
|\Psi_j(l)|\leq1\qquad\text{for all }l=0,\pm1,\dots,\pm(L_j-1). 
\end{equation}
Third, $H_{j,L}$ well approximates $2^j1_{\Lambda_{-j}}$ as $L\to\infty$ in the sense that
\begin{equation}\label{lai}
\lim_{L\to\infty}H_{j,L}(\lambda)
=\left\{
\begin{array}{ll}
2^j  & \text{if }\lambda\in(\frac{\pi}{2^j},\frac{\pi}{2^{j-1}}),\\
0  & \text{if }\lambda\in[0,\frac{\pi}{2^j})\cup(\frac{\pi}{2^{j-1}},\pi]
\end{array}
\right.
\end{equation}
by Theorem 1 in \cite{Lai1995}. Here, note that \citet{Lai1995} defines Daubechies' wavelet filter of length $L$ as a filter whose power transfer function is given by $H_L(\lambda)/2$.

Finally, we prove an important property of the derivatives of $H_L$ and $G_L$ as well as its consequences.  
\begin{lemA}\label{lem:daub-deriv}
For any even positive integer $L$, 
\begin{equation}\label{eq:daub-deriv}
\int_{-\pi}^\pi|H_L'(\lambda)|d\lambda=\int_{-\pi}^\pi|G_L'(\lambda)|d\lambda=4.
\end{equation}
\end{lemA}

\begin{proof}
From the proof of Theorem 3 in \cite{Lai1995} we have
\[
G_L'(\lambda)=-\binom{L-2}{L/2-1}\frac{L-1}{2^{L-2}}\sin^{L-1}(\lambda),\qquad\lambda\in[0,\pi].
\]
Thus we obtain
\begin{align*}
\int_{-\pi}^\pi|G_L'(\lambda)|d\lambda
&=2\binom{L-2}{L/2-1}\frac{L-1}{2^{L-2}}\int_0^\pi\sin^{L-1}(\lambda)d\lambda\\
&=2\binom{L-2}{L/2-1}\frac{L-1}{2^{L-2}}\int_0^{\pi/2}\{\sin^{L-1}(\lambda)+\cos^{L-1}(\lambda)\}d\lambda.
\end{align*}
Using Eq.(3.621.1) in \cite{GR2007}, we infer that
\begin{align*}
\int_{-\pi}^\pi|G_L'(\lambda)|d\lambda
&=4\binom{L-2}{L/2-1}(L-1)\frac{(L/2-1)!(L/2-1)!}{(L-1)!}=4.
\end{align*}
So we obtain the second identity in \eqref{eq:daub-deriv}. The first identity in \eqref{eq:daub-deriv} immediately follows from the relation $H_L(\lambda)=G_L(\lambda+\pi)$. 
\end{proof}

\begin{coroA}\label{coro:deriv}
For any $j\in\mathbb{N}$ and even positive integer $L$, 
\[
\int_{\Lambda_{-j}}|H_{j,L}'(\lambda)|d\lambda\leq2^{j+1}j.
\]
\end{coroA}

\begin{proof}
Using the Leibniz rule, we deduce 
\begin{equation}\label{eq:leibniz}
|H_{j,L}'(\lambda)|
\leq4^{j-1}|H_L'(2^{j-1}\lambda)|+2^{j-1}\sum_{i=0}^{j-2}2^i|G_L'(2^i\lambda)|.
\end{equation}
So we obtain by \eqref{eq:daub-bound} and Lemma \ref{lem:daub-deriv}
\begin{align*}
\int_{\Lambda_{-j}}|H_{j,L}'(\lambda)|d\lambda
\leq2^{j-1}\int_{-\pi}^\pi|H_L'(\lambda)|d\lambda+2^{j-1}\sum_{i=0}^{j-2}\int_{-\pi}^\pi|G_L'(\lambda)|d\lambda
\leq2^{j+1}j.
\end{align*}
This completes the proof.
\end{proof}

\begin{coroA}\label{coro:decay}
For any $j\in\mathbb{N}$, even positive integer $L$ and $l=\pm1,\dots,\pm(L_j-1)$, 
\[
|\Psi_j(l)|\leq\frac{4^j}{\pi l}.
\]
\end{coroA}

\begin{proof}
By definition we have
\[
\Psi_j(l)=\frac{1}{2\pi}\int_{-\pi}^\pi H_{j,L}(\lambda)e^{\sqrt{-1}l\lambda}d\lambda.
\]
Hence, integration by parts yields
\[
\Psi_j(l)=-\frac{1}{2\pi\sqrt{-1}l}\int_{-\pi}^\pi H'_{j,L}(\lambda)e^{\sqrt{-1}l\lambda}d\lambda.
\]
Therefore, noting that both $|H_L|$ and $|G_L|$ are periodic with period $\pi$, we obtain by \eqref{eq:leibniz}
\begin{align*}
|\Psi_j(l)|&\leq\frac{1}{2\pi l}\left\{4^{j-1}\int_{-\pi}^\pi |H_L'(2^{j-1}\lambda)|d\lambda+2^{j-1}\sum_{i=0}^{j-2}2^i\int_{-\pi}^\pi|G_L'(2^i\lambda)|d\lambda\right\}\\
&=\frac{1}{2\pi l}\left\{2^{j-1}\int_{-2^{j-1}\pi}^{2^{j-1}\pi} |H_L'(\lambda)|d\lambda+2^{j-1}\sum_{i=0}^{j-2}\int_{-2^i\pi}^{2^i\pi}|G_L'(\lambda)|d\lambda\right\}\\
&=\frac{1}{2\pi l}\left\{4^{j-1}\int_{-\pi}^{\pi} |H_L'(\lambda)|d\lambda+2^{j-1}\sum_{i=0}^{j-2}2^i\int_{-\pi}^{\pi}|G_L'(\lambda)|d\lambda\right\}
\leq\frac{4^j}{\pi l},
\end{align*}
where we used Lemma \ref{lem:daub-deriv} and the inequality $\sum_{i=0}^{j-2}2^i\leq2^{j-1}$ to deduce the last inequality.
\end{proof}

\section*{Acknowledgments}

%Takaki Hayashi's research was supported by JSPS Grant-in-Aid for Scientific Research (C) Grant Number JP16K03601. Yuta Koike's research was supported by JST CREST and JSPS Grant-in-Aid for Young Scientists (B) Grant Number JP16K17105.
Takaki Hayashi's research was partly supported by JSPS KAKENHI Grant Numbers JP16K03601, JP17H01100. 
Yuta Koike's research was partly supported by JST CREST Grant Number JPMJCR14D7 and JSPS KAKENHI Grant Number JP16K17105, JP18H00836, JP19K13668. 

{\small
\renewcommand*{\baselinestretch}{1}\selectfont
\addcontentsline{toc}{section}{References}
%\bibliography{wavelet}

}

\end{document}